\documentclass[aps,pra,reprint,superscriptaddress,nofootinbib,longbibliography,floatfix]{revtex4-2}

\usepackage{amsmath,amssymb,mathtools,bm}
\usepackage{braket}
\usepackage{amsthm}
\usepackage{graphicx}
\usepackage{tikz}
\usetikzlibrary{positioning}
\usepackage{microtype}
\usepackage{hyperref}
\usepackage{booktabs}
\usepackage{xcolor}

\hypersetup{colorlinks=true,citecolor=blue,linkcolor=blue,urlcolor=blue}

\newcommand{\sprof}{\mathfrak{s}}
\newcommand{\Zd}{\mathbb{Z}_{d}}
\newcommand{\Vn}{V^n}
\newcommand{\id}{\mathbb{I}}
\newcommand{\cH}{\mathcal{H}}
\newcommand{\cM}{\mathcal{M}}
\newcommand{\cN}{\mathcal{N}}

\newcommand{\Tr}{\operatorname{tr}}
\newcommand{\supp}{\operatorname{supp}}

\newcommand{\hbin}{h_2}
\newcommand{\ketbra}[2]{|#1\rangle\!\langle #2|}
\newcommand{\proj}[1]{|#1\rangle\!\langle #1|}
\newcommand{\dd}{\mathrm{d}}
\newtheorem{theorem}{Theorem}
\newtheorem{lemma}[theorem]{Lemma}
\newtheorem{proposition}[theorem]{Proposition}
\newtheorem{corollary}[theorem]{Corollary}
\newtheorem{problem}[theorem]{Problem}

\begin{document}

\title{Mean-State Entropy Hierarchies and Classical Communication through Quantum Convolutions}

\author{Chunhe Xiong}
\email{xiongchunhe@csu.edu.cn}
\affiliation{School of Mathematics and Statistics, Central South University, Changsha 410083, China}

\author{Sunho Kim}
\affiliation{School of Mathematical Sciences, Harbin Engineering University, Harbin 150001, China}

\author{Qing-Hua Zhang}
\affiliation{School of Mathematics and Statistics, Changsha University of Science and Technology, Changsha 410114, China}

\author{Shao-Ming Fei}
\email{feishm@cnu.edu.cn}
\affiliation{School of Mathematical Sciences, Capital Normal University, Beijing 100048, China}

\author{Junde Wu}
\email{wjd@zju.edu.cn}
\affiliation{School of Mathematical Sciences, Zhejiang University, Hangzhou 310027, China}

\begin{abstract}

Quantum convolution provides a discrete-variable analogue of classical convolution, with the mean state capturing the stabilizer structure preserved under repeated convolution. We establish a finite-step entropy hierarchy generated by compatible stabilizer dephasings. Along every compatible isotropic flag, the entropy increases toward the mean-state entropy ceiling, while the relative-entropy distance to the mean state decomposes exactly into successive coherence losses and a terminal classical nonuniformity. Optimizing over compatible subspaces yields an intrinsic entropy profile of the state. For quantum convolutional channels, Weyl covariance reduces the one-shot classical communication problem to minimal output entropy. A spectral-transfer argument shows that suitable stabilizer inputs reproduce stabilizer-measurement distributions of the environment as channel-output spectra. This gives a computable Holevo lower bound over all complete stabilizer measurements; its compatible restriction is characterized by the entropy hierarchy and refines the previous mean-state bound of Bu, Gu, and Jaffe.
The bound is exact for stabilizer-diagonal environments, for which the Holevo capacity is strongly additive, and yields a single-letter formula for a nonstabilizer qutrit family. The same family also exhibits a coexistence region with simultaneously positive classical and quantum communication rates.


\end{abstract}

\maketitle

\section{Introduction}

In classical probability, convolution is tied to two basic structures: multiplication of Fourier transforms and the emergence of Gaussian equilibrium under iteration. Quantum convolution imports this picture into a finite-dimensional phase space. Weyl characteristic functions replace classical Fourier transforms, stabilizer structure replaces the role of linear constraints, and repeated convolution leads to a distinguished equilibrium object known as the mean state \cite{Wootters1987ap,Gottesman1997phd,Gross2006jmp,King2003tit,Bu2025prl,Bu2023discrete,Bu2023pnas}. This makes quantum convolution a useful setting in which ideas from stabilizer theory, entropy inequalities, and quantum communication can be studied within the same algebraic framework.

The mean state was introduced in the study of discrete quantum central-limit behavior \cite{Bu2023pnas,Bu2023discrete} and later appeared in related convolutional studies of stabilizer and magic structure \cite{Bu2025tit}.
It records precisely the Weyl modes whose characteristic coefficients have unit modulus and therefore survive the convolutional dynamics. In this sense, it plays a role analogous to a discrete Gaussian endpoint. Previous work has also clarified its extremal character: once the exact stabilizer constraints are fixed, the mean state is the maximally mixed state on the corresponding stabilizer sector and hence the entropy-maximizing completion of those constraints. This endpoint picture is conceptually important, but it leaves open a more local question. Before the state reaches the mean-state entropy ceiling, how is its information lost under finite stabilizer coarse-grainings?

We address this question by introducing a finite-step entropy hierarchy associated with the mean state. Starting from the exact stabilizer skeleton of a state, we consider isotropic flags that add compatible stabilizer observables one at a time. The corresponding stabilizer dephasings produce a monotone sequence of entropies below the mean-state entropy ceiling. More importantly, the increase at each step is not merely monotonic: it is exactly a relative-entropy loss. As a result, the total relative-entropy distance to the mean state decomposes into successive losses of stabilizer-resolved coherence and a final classical nonuniformity term. Optimizing over compatible stabilizer subspaces gives an intrinsic entropy profile of the state. Unlike the mean-state entropy, which describes only the final equilibrium value, the entropy profile resolves how stabilizer information is progressively lost at different measurement resolutions. Thus the mean state is not only an asymptotic endpoint, but also organizes a finite-resolution entropy geometry around the stabilizer skeleton.

We then apply this state-space structure to communication through quantum convolutional channels. Classical communication over a quantum channel is governed by the Holevo information, whose regularization is generally unavoidable because additivity fails in general. The additivity problem is closely connected with the minimum output entropy and output purity of quantum channels, and has been extensively studied through both positive results for symmetric channels and counterexamples to multiplicativity conjectures
\cite{Holevo1973pit,Schumacher1997pra,Holevo1998tit,Werner2002jmp,King2001tit,Hastings2009np,Hayden2008cmp,Fukuda2010cmp}. Nevertheless, symmetric channels and entanglement-breaking channels provide important cases where one-shot entropy formulae or additivity theorems are available \cite{King2003tit,Shor2002jmp}. Quantum convolutional channels form a natural structured family in this direction. Their Weyl covariance reduces the one-shot Holevo problem, under the usual positivity condition on the convolution matrix, to a minimal-output-entropy problem.

The communication application has two layers. First, a spectral-transfer argument shows that every complete stabilizer measurement of the environment can be reproduced as the output spectrum of a suitable stabilizer channel input. This yields an optimized stabilizer-dephasing lower bound over all complete stabilizer measurements. Second, restricting this bound to stabilizer measurements compatible with the exact stabilizer skeleton connects it to the entropy hierarchy and gives the canonical lower bound $n\log d-\mathfrak{s}_n(\sigma)$, which refines the earlier mean-state bound.

We also identify cases where this lower bound is exact. For stabilizer-diagonal environments, the convolutional channel is entanglement-breaking, and the Holevo capacity is strongly additive. This gives an exact regularized capacity formula in terms of the stabilizer-dephasing entropy. Beyond stabilizer environments, we analyze a one-parameter qutrit family for which the single-letter stabilizer-dephasing formula can be proved explicitly. The same family also admits positive coherent-information lower bounds on open parameter regions. Hence non-stabilizer environments can support simultaneous classical and quantum communication through quantum convolutional channels; this coexistence is consistent with, but conceptually distinct from, recent studies of magic resources in quantum communication, where non-stabilizerness has been shown to influence quantum capacity in related discrete beam-splitter settings \cite{Bu2025prl}.


The paper is organized as follows. Section~II fixes notation for Weyl systems, stabilizer dephasings, mean states, and quantum convolutional channels. Section~III proves Weyl covariance and the minimal-output-entropy reduction. Section~IV develops the mean-state entropy hierarchy and its relative-entropy decomposition. Section V proves the stabilizer spectral-transfer Holevo bound. Section VI gives exact, robust, and single-letter formulae, including stabilizer-diagonal environments and the qutrit nonstabilizer family. Section VII studies the coexistence of classical and quantum communication.


\section{Framework}

All logarithms are base two. Throughout, \(d\) is an odd prime and \(n\) is a positive integer.

\subsection{Weyl operators and characteristic functions}

Let \(\cH\simeq(\mathbb{C}^d)^{\otimes n}\), let \(\omega_d=e^{2\pi i/d}\), and define on one qudit
\begin{equation*}
 X\ket{k}=\ket{k+1},\qquad Z\ket{k}=\omega_d^k\ket{k},
\end{equation*}
with arithmetic modulo \(d\). For \(z=(p,q)\in \Vn:=\Zd^n\times\Zd^n\), set
\begin{equation*}
 w(p,q)=\omega_d^{-2^{-1}p\cdot q}Z^pX^q,
\end{equation*}
where \(2^{-1}\) is the inverse of \(2\) in \(\Zd\). The symplectic form is
\begin{equation*}
 \langle(p,q),(p',q')\rangle_s=p\cdot q'-p'\cdot q.
\end{equation*}
The Weyl operators satisfy
\begin{equation*}
 w(z')w(z)w(z')^\dagger=\omega_d^{\langle z',z\rangle_s}w(z).
\end{equation*}
The characteristic function of a state \(\rho\) is
\begin{equation}
 \Xi_\rho(z)=\Tr[\rho w(-z)],
 \rho=d^{-n}\sum_{z\in\Vn}\Xi_\rho(z)w(z).
 \label{eq:weyl-expansion}
\end{equation}

A subspace \(\cN\subseteq\Vn\) is isotropic when \(\langle u,v\rangle_s=0\) for all \(u,v\in N\). We use \(N\) for a general isotropic subspace, and reserve calligraphic letters such as \(\mathcal M\) for maximal isotropic subspaces. 
 If \(\dim N=k\), let \(\widehat N=\operatorname{Hom}(N,U(1))\) denote the character group of \(N\). For \(\nu\in\widehat N\), define the joint spectral projector
\begin{equation}
 P_\nu^{(N)}=d^{-k}\sum_{u\in N}\nu(u)\,w(u).
 \label{eq:stabilizer-sector-projector}
\end{equation}
The projectors \(\{P_\nu^{(N)}\}_{\nu\in\widehat N}\) are mutually orthogonal, sum to the identity, and have rank \(d^{n-k}\). They are rank one only when \(N\) is maximal isotropic, namely \(k=n\).

The stabilizer pinching associated with \(N\) is
\begin{align}\label{eq:correct-pinching}
 &\Delta_N(\rho)=\sum_{\nu\in\widehat N}P_\nu^{(N)}\rho P_\nu^{(N)} \nonumber \\
 =&\frac{1}{d^k}\sum_{u\in N}w(u)\rho w(u)^\dagger=d^{-n}\sum_{z\in N^\perp}\Xi_\rho(z)w(z),
\end{align}
where \(N^\perp=\{z:\langle z,u\rangle_s=0\ \forall u\in N\}\). Equation \eqref{eq:correct-pinching} is important: a partial pinching retains the Weyl modes in \(N^\perp\), not only those in \(N\). For maximal isotropic \(N\), one has \(N^\perp=N\).

\subsection{Exact stabilizer skeleton and mean state}

For a state \(\rho\), define its exact stabilizer skeleton
\begin{equation*}
 S_\rho:=\{z\in\Vn:|\Xi_\rho(z)|=1\}.
\end{equation*}
The mean state \(M(\rho)\) is specified by
\begin{equation*}
 \Xi_{M(\rho)}(z)=
 \begin{cases}
 \Xi_\rho(z),& |\Xi_\rho(z)|=1,\\
 0,&|\Xi_\rho(z)|<1.
 \end{cases}
\end{equation*}
As reviewed below, \(S_\rho\) is isotropic and \(M(\rho)\) is the maximally mixed state on the stabilizer sector fixed by the exact constraints.

\subsection{Quantum convolutional channels}

Let
\begin{equation*}
 G=\begin{pmatrix}g_{00}&g_{01}\\ g_{10}&g_{11}\end{pmatrix}
\end{equation*}
be invertible over \(\Zd\), and set \(N_G=(\det G)^{-1}\). The convolutional unitary is
\begin{equation}
 U_G=\sum_{i,j}
 \ket{N_Gg_{11}i-N_Gg_{10}j,-N_Gg_{01}i+N_Gg_{00}j}\bra{i,j}.
 \label{eq:UG}
\end{equation}
For an environment state \(\sigma\), define
\begin{equation*}
 \Lambda_\sigma(\rho)=\Tr_B[U_G(\rho\otimes\sigma)U_G^\dagger].
\end{equation*}
The convolution--multiplication duality reads \cite{Bu2023pnas}
\begin{equation}
 \Xi_{\Lambda_\sigma(\rho)}(p,q)
 =\Xi_\rho(N_Gg_{11}p,g_{00}q)
  \Xi_\sigma(-N_Gg_{10}p,g_{01}q).
 \label{eq:duality}
\end{equation}
We call \(G\) even-parity positive if \(g_{00},g_{11}\neq0\), odd-parity positive if \(g_{01},g_{10}\neq0\), and positive if all four entries are nonzero.

The one-shot Holevo information and regularized Holevo capacity are
\begin{align*}
 &\chi(\Lambda)=\max_{\{p_i,\rho_i\}}
 S\!\left(\Lambda\!\left(\sum_i p_i\rho_i\right)\right)
 -\sum_i p_iS(\Lambda(\rho_i)),\\
 &C_H(\Lambda)=\lim_{m\to\infty}\frac{1}{m}\chi(\Lambda^{\otimes m}),
\end{align*}
where \(S(\rho)\) denotes the von Neumann entropy of $\rho$. Similarly, \(H( p)\) denotes the Shannon entropy of vector $p$. 

Denote the complementary channel of $\Lambda$ by $\Lambda^c$, the coherent information is
\begin{equation*}
 I_c(\rho,\Lambda)=S(\Lambda(\rho))-S(\Lambda^c(\rho)).
\end{equation*}
The quantum capacity satisfies $Q(\Lambda)\ge\max_\rho I_c(\rho,\Lambda)$ \cite{Lloyd1997pra,Devetak2005tit,Devetak2005cmp}.

\section{Covariance and capacity preliminaries}

\begin{theorem}[Weyl covariance]
For every invertible convolution matrix \(G\),
\begin{align}
 &\Lambda_\sigma[w(a,b)\rho w(a,b)^\dagger]\nonumber\\
 =&w(g_{00}a,N_Gg_{11}b)\Lambda_\sigma(\rho)
  w(g_{00}a,N_Gg_{11}b)^\dagger.
 \label{eq:covariance}
\end{align}
\end{theorem}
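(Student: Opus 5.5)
The plan is to work entirely at the level of characteristic functions. By the Weyl expansion \eqref{eq:weyl-expansion}, a state is determined by its characteristic function. So it suffices to show that both sides of \eqref{eq:covariance} have the same characteristic function at every \((p,q)\in\Vn\). The ingredients are the Weyl commutation relation, the convolution--multiplication duality \eqref{eq:duality}, and linearity of the trace.

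\emph{Step 1 (phase rule for Weyl conjugation).} Rewrite the commutation relation as \(w(z')^\dagger w(z)w(z')=\omega_d^{-\langle z',z\rangle_s}w(z)\). By cyclicity of the trace, for any state \(\tau\) and \(z'=(a,b)\),
\begin{equation*}
 \Xi_{w(z')\tau w(z')^\dagger}(z)=\Tr[\tau\, w(z')^\dagger w(-z)w(z')]=\omega_d^{\langle z',z\rangle_s}\,\Xi_\tau(z).
\end{equation*}
Here we used bilinearity, \(\langle z',-z\rangle_s=-\langle z',z\rangle_s\). Writing \(z=(p,q)\), the phase is \(\omega_d^{aq-pb}\).

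\emph{Step 2 (left-hand side).} Apply \eqref{eq:duality} to the input \(w(a,b)\rho w(a,b)^\dagger\). The environment factor \(\Xi_\sigma(-N_Gg_{10}p,g_{01}q)\) is unchanged. The input factor is evaluated at \((N_Gg_{11}p,g_{00}q)\), so Step 1 multiplies it by \(\omega_d^{a g_{00}q-N_Gg_{11}p\,b}\). Hence
\begin{equation*}
 \Xi_{\Lambda_\sigma(w\rho w^\dagger)}(p,q)=\omega_d^{g_{00}a\,q-p\,N_Gg_{11}b}\,\Xi_{\Lambda_\sigma(\rho)}(p,q).
\end{equation*}

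\emph{Step 3 (right-hand side and conclusion).} Apply Step 1 directly to \(\tau=\Lambda_\sigma(\rho)\) with \(z'=(g_{00}a,N_Gg_{11}b)\). This gives exactly the same phase \(\omega_d^{g_{00}a\,q-p\,N_Gg_{11}b}\). The two characteristic functions therefore agree for all \((p,q)\), and invertibility of the Weyl expansion yields \eqref{eq:covariance}.

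The only real obstacle is bookkeeping of sign and scaling conventions. First, the characteristic function is defined with \(w(-z)\), so the sign of the exponent in Step 1 must be tracked. Second, the arguments in \eqref{eq:duality} must be matched to the displaced point: \(g_{00}\) scales \(q\), and \(N_Gg_{11}\) scales \(p\). I would double-check Step 1 in the single-qudit case with \(z'=(1,0)\) and \(z=(0,1)\) before relying on it. Note that no positivity assumption on \(G\) is needed, consistent with the statement holding for every invertible \(G\).
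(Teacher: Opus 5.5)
Your proposal is correct and follows the paper's argument. The paper likewise notes that Weyl conjugation multiplies the characteristic function (the Weyl-expansion coefficient of \(w(p,q)\)) by \(\omega_d^{a\cdot q-b\cdot p}\), transports this phase through the duality \eqref{eq:duality} to \(\omega_d^{g_{00}a\cdot q-N_Gg_{11}b\cdot p}\), and identifies that as conjugation of the output by \(w(g_{00}a,N_Gg_{11}b)\); your sign bookkeeping in Step 1 agrees with the paper's conventions.
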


\begin{proof}
Conjugation of the Weyl expansion \eqref{eq:weyl-expansion} by \(w(a,b)\) multiplies the coefficient of \(w(p,q)\) by \(\omega_d^{a\cdot q-b\cdot p}\). Equation \eqref{eq:duality} transfers this phase to \(\omega_d^{g_{00}a\cdot q-N_Gg_{11}b\cdot p}\), which is precisely the phase generated by conjugation of the output by \(w(g_{00}a,N_Gg_{11}b)\).
\end{proof}

\begin{corollary}[Minimal-output-entropy form]
\label{cor:moe}
If \(G\) is even-parity positive, then \(\Lambda_\sigma\) is unital and
\begin{equation}
 \chi(\Lambda_\sigma)=n\log d-
 \min_{\ket\psi}S\!\left(\Lambda_\sigma(\proj\psi)\right).
 \label{eq:moe}
\end{equation}
\end{corollary}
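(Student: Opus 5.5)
We prove the two claims in order: first unitality, then the Holevo formula, which follows the standard argument for covariant unital channels in the style of Holevo and King, using the Weyl covariance of Theorem~1.

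\emph{Unitality.} The maximally mixed state \(\id/d^n\) has characteristic function \(\Xi(z)=\delta_{z,0}\). Insert this into the duality \eqref{eq:duality}. The first factor becomes \(\delta_{(N_Gg_{11}p,\,g_{00}q),0}\). Because \(g_{00},g_{11}\neq0\) and \(N_G\) is invertible, this vanishes unless \((p,q)=0\). At \((p,q)=0\) the second factor equals \(\Xi_\sigma(0,0)=1\). Hence \(\Xi_{\Lambda_\sigma(\id/d^n)}=\delta_{\cdot,0}\), so \(\Lambda_\sigma(\id/d^n)=\id/d^n\). Even-parity positivity is used exactly here: it makes the map \((p,q)\mapsto(N_Gg_{11}p,g_{00}q)\) injective.

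\emph{Upper bound.} For any ensemble \(\{p_i,\rho_i\}\), we have \(S(\Lambda_\sigma(\sum_ip_i\rho_i))\le n\log d\). Each \(\rho_i\) is a convex combination of pure states, and the von Neumann entropy is concave while \(\Lambda_\sigma\) is linear. Therefore \(S(\Lambda_\sigma(\rho_i))\ge\min_{\ket\psi}S(\Lambda_\sigma(\proj\psi))\). Together these give \(\chi(\Lambda_\sigma)\le n\log d-S_{\min}\). The minimum is attained by compactness and continuity.

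\emph{Achievability.} Let \(\psi_0\) be a minimizer. Take the uniform ensemble \(\{d^{-2n},\,w(a,b)\psi_0 w(a,b)^\dagger\}_{(a,b)\in V^n}\). The Weyl twirl \(d^{-2n}\sum_{z}w(z)\rho w(z)^\dagger\) equals \(\Tr(\rho)\,\id/d^n\), since it kills every nonzero Weyl mode by the commutation relation. So the average input is \(\id/d^n\), and by unitality the average output is \(\id/d^n\), with entropy \(n\log d\).

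By Theorem~1, each output \(\Lambda_\sigma(w(a,b)\psi_0w(a,b)^\dagger)\) is a unitary conjugate of \(\Lambda_\sigma(\psi_0)\). It therefore has entropy \(S_{\min}\). The Holevo quantity of this ensemble is thus \(n\log d-S_{\min}\), which matches the upper bound.

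Theorem~1 holds for arbitrary invertible \(G\), so the covariance step needs no positivity assumption; the only role of the hypothesis is unitality. The only point needing care is that the Weyl operators on the output side in Theorem~1 are genuinely unitary, which they are. The argument has no real obstacle: the one substantive check is the injectivity argument in the unitality step.
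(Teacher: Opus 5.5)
Your proof is correct and follows the paper's argument step for step. Unitality comes from invertibility of $(p,q)\mapsto(N_Gg_{11}p,g_{00}q)$ in the duality, the upper bound from the maximum and minimum output entropy, and achievability from averaging the Weyl orbit of a pure minimizer under covariance; you additionally spell out details the paper leaves implicit, namely the concavity reduction to pure inputs and why the Weyl twirl gives $\id/d^n$.
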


\begin{proof}
The map \((p,q)\mapsto(N_Gg_{11}p,g_{00}q)\) is invertible, so Eq.~\eqref{eq:duality} sends the characteristic function of \(\id/d^n\) to itself. Thus the channel is unital. The upper bound in Eq.~\eqref{eq:moe} follows from the maximum output entropy and minimum output entropy. Conversely, take a pure minimizer and average its Weyl orbit. Covariance keeps all output entropies equal, while the uniform Weyl orbit averages to \(\id/d^n\), attaining the upper bound.
\end{proof}

The entanglement-breaking statement used below was proved in Ref.~\cite{Xiong2026Private}. We record only its classical-capacity consequence.

\begin{theorem}[Strong-additivity consequence for stabilizer environments]
\label{thm:stabilizer-additivity}
Let \(G\) be positive. If \(\sigma\) is a convex combination of pure stabilizer states, then \(\Lambda_\sigma\) is entanglement-breaking. Consequently, for every channel \(\Phi\),
\begin{equation}
 \chi(\Lambda_\sigma\otimes\Phi)=\chi(\Lambda_\sigma)+\chi(\Phi), C_H(\Lambda_\sigma)=\chi(\Lambda_\sigma).
 \label{eq:strong-additivity}
\end{equation}
\end{theorem}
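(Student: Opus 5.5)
The plan has two parts: first show that $\Lambda_\sigma$ is entanglement-breaking, then quote Shor's strong-additivity theorem for entanglement-breaking channels \cite{Shor2002jmp}.

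\emph{Reduction to pure stabilizer environments.} The map $\sigma\mapsto\Lambda_\sigma$ is affine, since
$\Lambda_\sigma(\rho)=\Tr_B[U_G(\rho\otimes\sigma)U_G^\dagger]$ is linear in $\sigma$. Entanglement-breaking channels form a convex set. It therefore suffices to treat $\sigma=\proj{\phi}$ with $\ket\phi$ a pure stabilizer state.

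\emph{The pure case.} Such a state has stabilizer group given by a maximal isotropic subspace $\mathcal M$ with a character $\nu$. Its characteristic function satisfies $|\Xi_\sigma(z)|=1$ on $\mathcal M$ and vanishes off $\mathcal M$. The duality \eqref{eq:duality} then gives
\begin{equation*}
\Xi_{\Lambda_\sigma(\rho)}(p,q)=\Xi_\rho(N_Gg_{11}p,g_{00}q)\,\Xi_\sigma(-N_Gg_{10}p,g_{01}q).
\end{equation*}
Since $g_{10},g_{01}\neq0$, the linear map $L:(p,q)\mapsto(-N_Gg_{10}p,g_{01}q)$ is invertible. Write $\mathcal M'=L^{-1}(\mathcal M)$, so the output keeps Weyl modes only on $\mathcal M'$. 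The key check is that $\mathcal M'$ is again maximal isotropic. The map $L$ rescales the symplectic form by the nonzero scalar $-N_Gg_{10}g_{01}$, so it preserves isotropy, and $\dim\mathcal M'=n$.

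Put $\mathcal M''=D(\mathcal M')$, where $D:(p,q)\mapsto(N_Gg_{11}p,g_{00}q)$. By the same argument $\mathcal M''$ is maximal isotropic. The output then takes the form
\begin{equation*}
\Lambda_\sigma(\rho)=d^{-n}\sum_{z\in\mathcal M'}\Xi_\rho(Dz)\,c(z)\,w(z),
\end{equation*}
where $c(z)=\Xi_\sigma(Lz)$ is a phase. Using \eqref{eq:correct-pinching} with $N^\perp=N$, this is the pinching of $\rho$ onto the rank-one projectors $\{P_\mu^{(\mathcal M'')}\}$, followed by relabeling. Hence the channel acts as
\begin{equation*}
\Lambda_\sigma(\rho)=\sum_\mu \Tr[P_\mu^{(\mathcal M'')}\rho]\,\tau_\mu .
\end{equation*}
Here the $\tau_\mu$ are the stabilizer states with group $\mathcal M'$ and suitable characters. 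This is a measure-and-prepare map, and so it is entanglement-breaking.

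\emph{Main obstacle: phase bookkeeping.} One must show the phases $c(z)$ combine with the character of $\rho$'s pinched components to form characters of $\mathcal M'$, so that each $\tau_\mu$ is a genuine state. I would avoid computing this directly. The output is a valid state for every input, so for each rank-one $P_\mu^{(\mathcal M'')}$ the image is a state supported only on modes in $\mathcal M'$, and must therefore be a stabilizer projector. Linearity then shows the channel factors through the measurement in the $\mathcal M''$ basis.

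\emph{Strong additivity.} For entanglement-breaking $\Lambda_\sigma$, Shor's theorem gives $\chi(\Lambda_\sigma\otimes\Phi)=\chi(\Lambda_\sigma)+\chi(\Phi)$ for every channel $\Phi$. Applying this inductively with $\Phi=\Lambda_\sigma^{\otimes(m-1)}$ gives $\chi(\Lambda_\sigma^{\otimes m})=m\chi(\Lambda_\sigma)$, hence $C_H(\Lambda_\sigma)=\chi(\Lambda_\sigma)$.
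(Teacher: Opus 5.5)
Your proposal is correct and follows the same skeleton as the paper's proof: affine dependence of $\Lambda_\sigma$ on $\sigma$ with convexity of entanglement-breaking channels, a measure-and-prepare form for pure stabilizer environments, and Shor's strong additivity; the paper simply cites Ref.~\cite{Xiong2026Private} for the pure case, whereas you derive it from Eq.~\eqref{eq:duality}, and your derivation is exactly the paper's spectral-transfer computation in Theorem~\ref{thm:holevo-bound} with $\mathcal M'=K_{\mathcal M}$ and $\mathcal M''=L_{\mathcal M}$. The ``phase bookkeeping'' obstacle is not real, since $c=\nu\circ L$ is a character of $\mathcal M'$ and so $\tau_\mu$ is the rank-one projector with character $z\mapsto\mu(Dz)\nu(Lz)$; your fallback is both unnecessary and not quite right as stated, because support on $\mathcal M'$-modes alone only makes $\Lambda_\sigma(P_\mu^{(\mathcal M'')})$ diagonal in the $\mathcal M'$ stabilizer basis (purity needs the unit-modulus coefficients), whereas entanglement breaking only requires $\Lambda_\sigma=\Lambda_\sigma\circ\Delta_{\mathcal M''}$ with arbitrary output states $\Lambda_\sigma(P_\mu^{(\mathcal M'')})$.
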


\begin{proof}
For a pure stabilizer environment, Ref.~\cite{Xiong2026Private} gives an explicit measure-and-prepare representation. Convexity preserves the entanglement-breaking property. Eq.\eqref{eq:strong-additivity} then follows from strong additivity of the Holevo information for entanglement-breaking channels \cite{Shor2002jmp}.
\end{proof}

\section{Mean-state entropy hierarchy}

This section contains the main structural result of the paper: the mean-state entropy hierarchy. The maximal-entropy endpoint is known \cite{Bu2023pnas,Bu2023discrete}; the new content is the exact hierarchy of compatible stabilizer pinchings and its relative-entropy decomposition.

\subsection{Exact constraints and normal form}

\begin{lemma}[Unit-modulus expectation]
Let \(\rho\) be a state and \(U\) a unitary. If \(|\Tr(\rho U)|=1\), then the support of \(\rho\) lies in one eigenspace of \(U\). More precisely, for \(\lambda=\Tr(\rho U)\),
\begin{equation}
 U\rho=\lambda\rho,\qquad \rho U=\lambda\rho.
\end{equation}
\end{lemma}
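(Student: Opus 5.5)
The plan is to use the spectral decomposition of $\rho$ together with the equality case of the Cauchy--Schwarz inequality. Write $\rho=\sum_j r_j\proj{\psi_j}$ with $r_j>0$ and orthonormal $\ket{\psi_j}$. Then
\begin{equation*}
 \Tr(\rho U)=\sum_j r_j\bra{\psi_j}U\ket{\psi_j}.
\end{equation*}
Each term satisfies $|\bra{\psi_j}U\ket{\psi_j}|\le\|\psi_j\|\,\|U\psi_j\|=1$, so the sum is a convex combination of points in the closed unit disk. If its modulus equals $1$, two things follow from strict convexity of the disk and the fact that $|\lambda|=1$ is an extreme point. First, every term with $r_j>0$ must satisfy $\bra{\psi_j}U\ket{\psi_j}=\lambda$. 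Second, each such term attains equality in Cauchy--Schwarz.

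Equality in Cauchy--Schwarz forces $U\ket{\psi_j}$ to be parallel to $\ket{\psi_j}$. Since the inner product equals $\lambda$, we get $U\ket{\psi_j}=\lambda\ket{\psi_j}$ for every $j$ in the support. Hence the support of $\rho$ lies in the $\lambda$-eigenspace of $U$, and $U\rho=\sum_j r_j\lambda\proj{\psi_j}=\lambda\rho$.

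For the right multiplication, unitarity is the key input. From $U\ket{\psi_j}=\lambda\ket{\psi_j}$ and $|\lambda|=1$ we obtain $U^\dagger\ket{\psi_j}=\bar\lambda\ket{\psi_j}$, so $\bra{\psi_j}U=\lambda\bra{\psi_j}$. This gives $\rho U=\lambda\rho$. One could alternatively take the adjoint of $U^\dagger\rho=\bar\lambda\rho$.

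The only delicate step is the extremality argument, namely deducing that every $\bra{\psi_j}U\ket{\psi_j}$ equals $\lambda$. I would make this explicit rather than leave it to geometric intuition. Multiply by $\bar\lambda$ to get
\begin{equation*}
 \sum_j r_j\,\mathrm{Re}\bigl(\bar\lambda\bra{\psi_j}U\ket{\psi_j}\bigr)=1.
\end{equation*}
Each real part is at most $1$ and the weights $r_j$ sum to $1$, so each real part equals $1$. Since each term also has modulus at most $1$, this forces $\bar\lambda\bra{\psi_j}U\ket{\psi_j}=1$ for every $j$.
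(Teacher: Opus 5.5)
Your proof is correct, but it diagonalizes the other operator. The paper's one-line argument expands $U$ spectrally, $U=\sum_k\lambda_kP_k$. Then $\Tr(\rho U)=\sum_k\lambda_k\Tr(\rho P_k)$ is a convex combination of points on the unit circle, with weights $\Tr(\rho P_k)\ge0$. Modulus one forces all the weight onto a single eigenvalue $\lambda$. Since $\rho\ge0$, this gives $\rho=P_\lambda\rho P_\lambda$, and both $U\rho=\lambda\rho$ and $\rho U=\lambda\rho$ follow at once because $P_\lambda$ commutes with $U$.

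You instead diagonalize $\rho$, so your points $\bra{\psi_j}U\ket{\psi_j}$ lie only in the closed disk, not on its boundary. That is why you need two extra steps: the real-part argument to get $\bra{\psi_j}U\ket{\psi_j}=\lambda$, and the Cauchy--Schwarz equality case to upgrade this to $U\ket{\psi_j}=\lambda\ket{\psi_j}$. You also need a separate step through $U^\dagger\ket{\psi_j}=\bar\lambda\ket{\psi_j}$ for right multiplication.

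The paper's route is shorter: the eigenspace statement comes in one stroke from the spectral theorem for $U$, and both identities are then automatic. Your route never uses the spectral decomposition of $U$. Its first half, the conclusion $U\rho=\lambda\rho$, goes through for any contraction $\|U\|\le1$, not only unitaries. It also makes explicit the extremality step that the paper leaves to the reader.
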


\begin{proof}
The expectation value is a convex combination of unit-modulus eigenvalues of \(U\). Its modulus can be one only if every eigenvalue carrying nonzero weight is the same.
\end{proof}

\begin{lemma}[Structure of the exact skeleton]\label{lem:exact-skeleton}
The set \(S_\sigma\) is a linear isotropic subspace of \(\Vn\), and \(\vartheta_\sigma:=\Xi_\sigma|_{S_\sigma}\) is a character of its additive group.
\end{lemma}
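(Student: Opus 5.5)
The plan is to apply the Unit-modulus expectation lemma to the unitaries \(w(-z)\), \(z\in S_\sigma\), and turn the resulting eigen-relations into algebraic identities for \(\Xi_\sigma\).

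First, for \(z\in S_\sigma\) put \(\lambda_z:=\Xi_\sigma(z)=\Tr[\sigma w(-z)]\). Since \(|\lambda_z|=1\), the lemma gives
\[
w(-z)\sigma=\lambda_z\sigma=\sigma w(-z).
\]
Equivalently \(w(z)\sigma=\bar\lambda_z\sigma=\sigma w(z)\), using \(w(-z)=w(z)^\dagger\), which follows from the symmetric phase convention for \(w(p,q)\). In particular \(0\in S_\sigma\), with \(\Xi_\sigma(0)=1\).

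\textbf{Closure under addition.} Take \(z,z'\in S_\sigma\). The composition rule for the symmetric Weyl operators is
\[
w(-z)w(-z')=\omega_d^{c(z,z')}\,w(-z-z'),
\]
where the phase \(c\) is proportional to \(\langle z,z'\rangle_s\) (with the factor \(2^{-1}\)). Hence
\[
\lambda_z\lambda_{z'}=\Tr[\sigma w(-z)w(-z')]=\omega_d^{c}\,\Xi_\sigma(z+z').
\]
The left side has modulus one, so \(z+z'\in S_\sigma\). Negation is also fine: \(\Xi_\sigma(-z)=\overline{\Xi_\sigma(z)}\) because \(\sigma\) is Hermitian. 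Over the prime field \(\Zd\), additive closure already gives closure under scalars, since \(kz\) is an iterated sum. So \(S_\sigma\) is a linear subspace.

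\textbf{Isotropy.} This is the only step using more than bookkeeping. From \(w(z)\sigma=\bar\lambda_z\sigma\) and \(\sigma w(z')=\bar\lambda_{z'}\sigma\), compute \(\Tr[w(z)\sigma w(z')]\) in two ways.
\begin{itemize}
\item Acting with \(w(z)\) on \(\sigma\) first gives \(\bar\lambda_z\Tr[\sigma w(z')]=\bar\lambda_z\bar\lambda_{z'}\).
\item Using cyclicity, \(\Tr[w(z)\sigma w(z')]=\Tr[\sigma w(z')w(z)]\). Writing \(w(z')w(z)=\omega_d^{\langle z',z\rangle_s}w(z)w(z')\) (from the conjugation relation) and applying the eigen-relations gives \(\omega_d^{\langle z',z\rangle_s}\bar\lambda_z\bar\lambda_{z'}\).
\end{itemize}
The product \(\bar\lambda_z\bar\lambda_{z'}\) is nonzero, so \(\omega_d^{\langle z',z\rangle_s}=1\). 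Because \(d\) is prime, \(\langle z,z'\rangle_s=0\) in \(\Zd\).

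\textbf{Character property.} With isotropy established, the cocycle phase \(\omega_d^{c(z,z')}\) in the closure step equals one. That identity then reads \(\Xi_\sigma(z+z')=\Xi_\sigma(z)\Xi_\sigma(z')\) on \(S_\sigma\). Together with \(\Xi_\sigma(0)=1\) and unit modulus, this shows \(\vartheta_\sigma\) is a homomorphism \(S_\sigma\to U(1)\), that is, a character.

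The main care point is the sign and phase conventions in the Weyl product formula. I would check that \(w(z)w(z')=\omega_d^{2^{-1}\langle z,z'\rangle_s}w(z+z')\), or its sign-flipped version, follows from the stated definition. Only proportionality to the symplectic form matters, since it vanishes once isotropy is known.
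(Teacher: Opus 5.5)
Your proof is correct and follows essentially the same route as the paper. You use the unit-modulus lemma to get eigen-relations $w(-z)\sigma=\Xi_\sigma(z)\sigma$, force the Weyl commutation phase to vanish, and then use the trivial cocycle on $S_\sigma$ for multiplicativity. Your trace identity for $\Tr[w(z)\sigma w(z')]$ is just the operator form of the paper's simultaneous-eigenvector argument. The only difference is ordering: you obtain additive closure before isotropy, using only that the cocycle phase has unit modulus, which is slightly cleaner than the paper's presentation.
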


\begin{proof}
For \(z\in S_\sigma\), the preceding lemma gives
\begin{equation}
 w(-z)\sigma=\Xi_\sigma(z)\sigma.
 \label{eq:eigenconstraint}
\end{equation}
If \(z,z'\in S_\sigma\), a nonzero vector in \(\supp\sigma\) is a simultaneous eigenvector of \(w(-z)\) and \(w(-z')\). The Weyl commutation relation then forces \(\langle z,z'\rangle_s=0\). On an isotropic subspace the Weyl cocycle is trivial, so Eq.~\eqref{eq:eigenconstraint} implies closure under addition and scalar multiplication and proves multiplicativity of \(\vartheta_\sigma\).
\end{proof}

\begin{proposition}[Clifford normal form and maximal-entropy endpoint \cite{Bu2023pnas,Bu2023discrete}]
Let \(r=\dim S_\sigma\). There exist a Clifford unitary \(C\) and an \((n-r)\)-qudit state \(\tau_E\) such that
\begin{align}
& C\sigma C^\dagger=\proj{0}^{\otimes r}\otimes\tau_E,\label{eq:sigma-normal}\\
& CM(\sigma)C^\dagger=\proj{0}^{\otimes r}\otimes\frac{\id_E}{d^{n-r}},
 \label{eq:mean-normal}
\end{align}
where \(E\) denotes the residual \((n-r)\)-qudit subsystem. Hence \(S(M(\sigma))=(n-r)\log d\). Moreover, \(M(\sigma)\) is the unique entropy maximizer among all states satisfying the exact constraints \eqref{eq:eigenconstraint}.
\end{proposition}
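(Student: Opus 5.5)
The plan has three stages. First, bring the exact skeleton to a standard form by a Clifford unitary. Second, read off the normal form of $\sigma$ from the eigenvalue constraints. Third, identify the mean state via the pinching formula \eqref{eq:correct-pinching} and settle entropy maximality by a relative-entropy identity.

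\textbf{Step 1: normalising the skeleton.} By Lemma~\ref{lem:exact-skeleton}, $S_\sigma$ is an $r$-dimensional isotropic subspace and $\vartheta_\sigma$ is a character on it. The symplectic group acts transitively on isotropic subspaces of a fixed dimension (Witt's theorem over $\Zd$ with $d$ odd), and every symplectic map is implemented, up to Weyl phases, by a Clifford unitary. So we can choose $C_0$ with $C_0 w(z)C_0^\dagger\propto w(Fz)$ and $F S_\sigma=\{(p,0):p\in\Zd^r\times\{0\}^{n-r}\}$, the span of the $Z$-type operators $Z_1,\dots,Z_r$. Conjugating further by a Pauli $w(a)$ multiplies the characteristic values on $S_\sigma$ by $\omega_d^{\langle a,\cdot\rangle_s}$. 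Since every character of $FS_\sigma$ extends to a linear functional $\langle a,\cdot\rangle_s$, we can arrange $\Xi_{C\sigma C^\dagger}(z)=1$ on $CS_\sigma$, with $C=w(a)C_0$.

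\textbf{Step 2: normal form of $\sigma$.} Equation~\eqref{eq:eigenconstraint} now says $Z_j\,C\sigma C^\dagger=C\sigma C^\dagger$ for $j\le r$, and by Hermiticity also on the right. Hence the support of $C\sigma C^\dagger$ lies in $\ket{0}^{\otimes r}\otimes(\mathbb{C}^d)^{\otimes(n-r)}$, which gives \eqref{eq:sigma-normal} for some state $\tau_E$.

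\textbf{Step 3: identifying the mean state.} The key claim is that $S_{\tau_E}=\{0\}$. If some nonzero $z_E$ had $|\Xi_{\tau_E}(z_E)|=1$, then $(0\oplus z_E)$ would lie in $S_{C\sigma C^\dagger}=CS_\sigma$ (up to the symplectic relabelling). But $CS_\sigma$ is supported only on the first $r$ qudits, which is a contradiction. Consequently $\Xi_{C\sigma C^\dagger}$ has unit modulus exactly on $CS_\sigma$. The state $\proj{0}^{\otimes r}\otimes\id/d^{n-r}$ has characteristic function equal to $1$ on $CS_\sigma$ and $0$ elsewhere, and the mean-state definition transforms covariantly under Cliffords. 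This yields \eqref{eq:mean-normal}, and immediately $S(M(\sigma))=(n-r)\log d$.

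\textbf{Step 4: uniqueness of the entropy maximiser.} Any state $\rho$ satisfying the constraints \eqref{eq:eigenconstraint} has, after conjugation by $C$, the form $\proj{0}^{\otimes r}\otimes\tau'$ by the same argument as in Step 2. Then
\begin{equation*}
S(\rho)=S(\tau')=(n-r)\log d-D(\tau'\|\id/d^{n-r}),
\end{equation*}
which equals $S(M(\sigma))$ if and only if $\tau'$ is maximally mixed, that is, $\rho=M(\sigma)$.

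\textbf{Main obstacle.} The delicate point is the bookkeeping in Step 1: making sure that the Clifford implementation of the symplectic map, including the Weyl phase cocycle $\omega_d^{-2^{-1}p\cdot q}$, is compatible with the Pauli correction that sets the character to be trivial. I would handle this by working only with the eigen-relations $w(-z)\sigma=\Xi_\sigma(z)\sigma$ and the fact that conjugation preserves them up to relabelling. This avoids explicit cocycle formulas; the odd-prime assumption guarantees that the phases are well defined.
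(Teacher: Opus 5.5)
Your proposal is correct and follows the paper's route. You map $S_\sigma$ to the $Z$-span of the first $r$ qudits by a symplectic (Witt/Gram--Schmidt) Clifford, remove the residual character with a Weyl displacement, use the eigen-constraints to force $\ket{0}^{\otimes r}$, obtain Eq.~\eqref{eq:mean-normal} from Clifford covariance of the characteristic function, and get uniqueness because constrained states live on a $d^{n-r}$-dimensional sector whose only entropy maximizer is the normalized projector (your relative-entropy identity just makes this explicit). Your ``key claim'' $S_{\tau_E}=\{0\}$ is correct but not needed, since covariance already gives $S_{C\sigma C^\dagger}=FS_\sigma$ directly.
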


\begin{proof}
By Lemma~\ref{lem:exact-skeleton}, \(S_\sigma\) is a linear isotropic subspace. A symplectic Gram--Schmidt construction maps \(S_\sigma\) to the span of the first \(r\) computational \(Z\)-generators. A Weyl displacement removes the residual character. The exact constraints then force the first \(r\) qudits into \(\ket0^{\otimes r}\), proving Eq.~\eqref{eq:sigma-normal}. Clifford covariance of the characteristic function gives Eq.~\eqref{eq:mean-normal}. Every state obeying the same constraints is supported on a subspace of dimension \(d^{n-r}\), whose unique entropy maximizer is its normalized projector.
\end{proof}

\subsection{Pinching Pythagoras and compatible coarse-grainings}

\begin{lemma}[Pinching Pythagoras]
\label{lem:pinching-pythagoras}
Let \(\Delta\) be an orthogonal pinching and set \(\rho_\Delta=\Delta(\rho)\). If \(\omega\) is fixed by \(\Delta\) and \(\supp\rho\subseteq\supp\omega\), then
\begin{equation}
 D(\rho\|\omega)=D(\rho\|\rho_\Delta)+D(\rho_\Delta\|\omega),
 \label{eq:pinching-pythagoras}
\end{equation}
where
\begin{equation}
 D(\rho\|\rho_\Delta)=S(\rho_\Delta)-S(\rho).
 \label{eq:pinching-entropy}
\end{equation}
\end{lemma}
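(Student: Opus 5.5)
The plan is to expand all three relative entropies as trace expressions and show that the cross terms cancel. The cancellation comes from the fact that $\Delta$ is a conditional expectation onto the block-diagonal algebra. Write $\Delta(X)=\sum_\nu P_\nu X P_\nu$ with $\{P_\nu\}$ mutually orthogonal projectors summing to the identity.

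\textbf{Step 1: log of a pinched operator.} If $\omega=\Delta(\omega)$, then $\omega=\sum_\nu P_\nu\omega P_\nu$ is block diagonal. Its logarithm, taken on the support, is therefore block diagonal too: $\log\omega=\sum_\nu P_\nu(\log\omega)P_\nu$. The same holds for $\log\rho_\Delta$, since $\rho_\Delta$ is fixed by $\Delta$ because $\Delta$ is idempotent.

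\textbf{Step 2: invariance of traces against block-diagonal operators.} For any block-diagonal $Y$ we have
\begin{equation*}
 \Tr[\rho Y]=\sum_\nu\Tr[\rho P_\nu Y P_\nu]=\sum_\nu\Tr[P_\nu\rho P_\nu Y]=\Tr[\rho_\Delta Y].
\end{equation*}
Applying this with $Y=\log\omega$ gives $\Tr\rho\log\omega=\Tr\rho_\Delta\log\omega$. Applying it with $Y=\log\rho_\Delta$ gives $\Tr\rho\log\rho_\Delta=\Tr\rho_\Delta\log\rho_\Delta=-S(\rho_\Delta)$.

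\textbf{Step 3: assemble.} The second identity from Step 2 yields
\begin{equation*}
 D(\rho\|\rho_\Delta)=-S(\rho)-\Tr\rho\log\rho_\Delta=S(\rho_\Delta)-S(\rho),
\end{equation*}
which is Eq.~\eqref{eq:pinching-entropy}. Adding $D(\rho_\Delta\|\omega)=-S(\rho_\Delta)-\Tr\rho_\Delta\log\omega$ and using the first identity from Step 2 gives
\begin{equation*}
 D(\rho\|\rho_\Delta)+D(\rho_\Delta\|\omega)=-S(\rho)-\Tr\rho\log\omega=D(\rho\|\omega),
\end{equation*}
which is Eq.~\eqref{eq:pinching-pythagoras}.

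\textbf{Main obstacle: supports.} The real care is needed in keeping the logarithms finite. First, $\supp\rho\subseteq\supp\rho_\Delta$: if $\rho_\Delta v=0$, then $\sum_\nu\|\rho^{1/2}P_\nu v\|^2=0$, so $\rho P_\nu v=0$ for each $\nu$, and hence $\rho v=0$. This makes $D(\rho\|\rho_\Delta)$ finite. Second, $\supp\rho_\Delta\subseteq\supp\omega$. Let $Q$ be the projector onto $\supp\omega$. Since $\omega$ commutes with every $P_\nu$, so does $Q$. The hypothesis $\supp\rho\subseteq\supp\omega$ gives $Q\rho Q=\rho$, and therefore $Q\rho_\Delta Q=\Delta(Q\rho Q)=\rho_\Delta$. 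With both inclusions in hand, $\log\omega$ and $\log\rho_\Delta$ are evaluated only on supports where the traces are finite, and Step 1 applies to the restriction of $\omega$ to $Q$. Once these inclusions are recorded, the algebra in Steps 2 and 3 is routine.
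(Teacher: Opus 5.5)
Your proof is correct and follows the paper's route: the paper also argues that $\log\rho_\Delta$ and $\log\omega$ are fixed by $\Delta$ and that $\Delta$ is self-adjoint for the Hilbert--Schmidt inner product (your Step 2), which lets $\rho$ be replaced by $\rho_\Delta$ in both cross terms. Your explicit check of the support chain $\supp\rho\subseteq\supp\rho_\Delta\subseteq\supp\omega$ supplies the finiteness bookkeeping that the paper leaves implicit.
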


\begin{proof}
The operator \(\log\rho_\Delta\) is fixed by \(\Delta\), and \(\Delta\) is self-adjoint for the Hilbert--Schmidt inner product. Therefore
\(\Tr(\rho\log\rho_\Delta)=\Tr(\rho_\Delta\log\rho_\Delta)\), proving Eq.~\eqref{eq:pinching-entropy}. The same fixed-point identity for \(\log\omega\) gives Eq.~\eqref{eq:pinching-pythagoras}.
\end{proof}

We call an isotropic subspace \(N\) \emph{compatible} with \(\sigma\) when \(S_\sigma\subseteq N\).

\begin{theorem}[Compatible-dephasing decomposition]
\label{thm:compatible-dephasing}
Let \(N\supseteq S_\sigma\) be isotropic with \(\dim N=k\). A Clifford unitary \(C_N\) can be chosen such that Eqs.~\eqref{eq:sigma-normal} and \eqref{eq:mean-normal} hold and
\begin{equation}
 C_N\Delta_N(\sigma)C_N^\dagger
 =\proj{0}^{\otimes r}\otimes
 \left(\Delta_Z^{\otimes(k-r)}\otimes\mathrm{\id}^{\otimes(n-k)}\right)(\tau_E),
 \label{eq:compatible-normal}
\end{equation}
where \(\Delta_Z\) is the single-qudit computational-basis dephasing channel and \(\mathrm{\id}\) denotes the identity channel on one qudit. Consequently,
\begin{equation}
 S(\sigma)\le S(\Delta_N(\sigma))\le S(M(\sigma)),
 \label{eq:compatible-order}
\end{equation}
with the exact identities
\begin{align}
 &D(\sigma\|\Delta_N(\sigma))
 =S(\Delta_N(\sigma))-S(\sigma),\notag\\
 &D(\Delta_N(\sigma)\|M(\sigma))
 =S(M(\sigma))-S(\Delta_N(\sigma)),\notag\\
 &D(\sigma\|M(\sigma))
 =D(\sigma\|\Delta_N(\sigma))
  +D(\Delta_N(\sigma)\|M(\sigma)).
 \label{eq:two-step-pythagoras}
\end{align}
\end{theorem}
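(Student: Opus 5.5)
The plan is to choose the Clifford $C_N$ by extending a symplectic basis adapted to the flag $S_\sigma\subseteq N$, and then to read every claim off a tensor-product normal form.

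\textbf{Step 1: the Clifford.} Pick a basis $u_1,\dots,u_r$ of $S_\sigma$ and extend it to a basis $u_1,\dots,u_k$ of $N$. Since $N$ is isotropic, symplectic Gram--Schmidt (Witt's theorem over $\Zd$) extends this to a symplectic basis of $\Vn$ in which each $u_i$, $i\le k$, plays the role of the $i$-th $Z$-generator. The associated Clifford $C_N$ therefore maps $S_\sigma$ onto $\mathrm{span}\{Z_1,\dots,Z_r\}$ and $N$ onto $\mathrm{span}\{Z_1,\dots,Z_k\}$, with a Weyl displacement appended to remove the residual character $\vartheta_\sigma$. This $C_N$ is one admissible choice in the proof of the Clifford normal form proposition, because that proof only requires $S_\sigma$ to be sent to the first $r$ $Z$-generators. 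Hence Eqs.~\eqref{eq:sigma-normal} and \eqref{eq:mean-normal} hold for $C_N$.

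\textbf{Step 2: the normal form \eqref{eq:compatible-normal}.} Use the second expression in Eq.~\eqref{eq:correct-pinching}: $\Delta_N$ is the group average of $w(u)\cdot w(u)^\dagger$ over $u\in N$. Clifford conjugation sends Weyl operators to Weyl operators up to phases, which cancel under conjugation. So $C_N\Delta_N(\cdot)C_N^\dagger$ is the average over the $Z$-type Weyl operators on the first $k$ qudits, namely $\Delta_Z^{\otimes k}\otimes\id^{\otimes(n-k)}$, applied to $C_N\sigma C_N^\dagger=\proj0^{\otimes r}\otimes\tau_E$. Since $\Delta_Z(\proj0)=\proj0$, this gives \eqref{eq:compatible-normal}. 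I expect this step to be the main technical point: the difficulty is bookkeeping, namely ensuring that the phases and the displacement are harmless and that a single Clifford works simultaneously for $S_\sigma$ and $N$. Both are settled by the flag-adapted basis above.

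\textbf{Step 3: the ordering \eqref{eq:compatible-order}.} $\Delta_N$ is a pinching, i.e.\ a unital channel, so $S(\Delta_N(\sigma))\ge S(\sigma)$. For the upper bound, note from \eqref{eq:compatible-normal} that $\Delta_N(\sigma)$ still has the form $\proj0^{\otimes r}\otimes\tau'$. It therefore satisfies the exact constraints \eqref{eq:eigenconstraint}, and the maximal-entropy part of the Clifford normal form proposition gives $S(\Delta_N(\sigma))\le S(M(\sigma))=(n-r)\log d$.

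\textbf{Step 4: the identities \eqref{eq:two-step-pythagoras}.} The first identity is Eq.~\eqref{eq:pinching-entropy} of Lemma~\ref{lem:pinching-pythagoras}. For the third, apply Lemma~\ref{lem:pinching-pythagoras} with $\omega=M(\sigma)$, checking its two hypotheses:
\begin{itemize}
\item[(i)] $M(\sigma)$ is fixed by $\Delta_N$. In normal form $M(\sigma)$ is $\proj0^{\otimes r}\otimes\id/d^{n-r}$, which is invariant under $\Delta_Z^{\otimes k}$. Equivalently, its Weyl support $S_\sigma\subseteq N\subseteq N^\perp$ is retained by \eqref{eq:correct-pinching}.
\item[(ii)] $\supp\sigma\subseteq\supp M(\sigma)$, which holds since $\supp M(\sigma)$ is the full sector $\ket0^{\otimes r}\otimes\cH_E$.
\end{itemize}
For the second identity, compute directly. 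Because $\Delta_N(\sigma)$ lives on the sector where $\log M(\sigma)=-(n-r)\log d$, we have $-\Tr[\Delta_N(\sigma)\log M(\sigma)]=(n-r)\log d=S(M(\sigma))$. Hence $D(\Delta_N(\sigma)\|M(\sigma))=S(M(\sigma))-S(\Delta_N(\sigma))$.
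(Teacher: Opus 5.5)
Your proposal is correct and follows essentially the same route as the paper. A symplectic basis adapted to the flag $S_\sigma\subseteq N$ gives a single Clifford, plus a Weyl displacement, that brings $\sigma$, $M(\sigma)$ and $\Delta_N$ to normal form simultaneously; the twirl form of Eq.~\eqref{eq:correct-pinching} then gives Eq.~\eqref{eq:compatible-normal}, and the remaining claims follow from unitality, the $d^{n-r}$-dimensional support sector, and Lemma~\ref{lem:pinching-pythagoras} with $\omega=M(\sigma)$, where your explicit hypothesis checks and direct computation of $D(\Delta_N(\sigma)\|M(\sigma))$ just spell out the paper's remark that $M(\sigma)$ is fixed by every compatible pinching and uniform on the common support sector.
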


\begin{proof}
By Lemma~\ref{lem:exact-skeleton}, \(S_\sigma\) is isotropic.  
Extend a basis of \(S_\sigma\) first to a basis of \(N\) and then to a symplectic basis of \(\Vn\). The induced Clifford maps the pair \(S_\sigma\subseteq N\) to
\begin{align*}
 S_0&=\mathrm{span}\{(e_1,0),\ldots,(e_r,0)\},\\
 N_0&=\mathrm{span}\{(e_1,0),\ldots,(e_k,0)\}.
\end{align*}
After a Weyl displacement trivializes the exact character, \(\sigma\) takes the form \eqref{eq:sigma-normal}. 
The twirling form in Eq.~\eqref{eq:correct-pinching} then shows what the pinching does in this normal form. The first \(r\) generators of \(N_0\) act trivially on the factor \(|0\rangle\langle0|^{\otimes r}\), because these qudits are already fixed by the exact stabilizer constraints. The remaining \(k-r\) generators of \(N_0/S_0\) are the computational \(Z\)-generators on the first \(k-r\) qudits of the residual system \(E\). Hence the pinching dephases exactly these \(k-r\) residual qudits and leaves the last \(n-k\) residual qudits untouched. This proves Eq.~\eqref{eq:compatible-normal}.
Random-unitary pinching cannot decrease entropy, and the output remains supported on the same \(d^{n-r}\)-dimensional exact stabilizer sector. This proves Eq.~\eqref{eq:compatible-order}. Finally, apply Lemma~\ref{lem:pinching-pythagoras} with \(\omega=M(\sigma)\), which is fixed by every compatible pinching and is uniform on the common support sector.
\end{proof}

\subsection{The hierarchy}

\begin{theorem}[Mean-state entropy hierarchy]
\label{thm:hierarchy}
Let
\begin{equation*}
 S_\sigma=N_r\subset N_{r+1}\subset\cdots\subset N_n=\cM
\end{equation*}
be a compatible isotropic flag with \(\dim N_k=k\), and set
\(\rho_k=\Delta_{N_k}(\sigma)\). Then
\begin{equation*}
 S(\sigma)=S(\rho_r)\le S(\rho_{r+1})\le\cdots\le S(\rho_n)\le S(M(\sigma)).
\end{equation*}
For every \(r\le k<n\),
\begin{align}
 &D(\rho_k\|\rho_{k+1})
 =S(\rho_{k+1})-S(\rho_k),\notag\\
& D(\rho_k\|M(\sigma))
 =D(\rho_k\|\rho_{k+1})+D(\rho_{k+1}\|M(\sigma)).
 \label{eq:nested-pythagoras}
\end{align}
Thus
\begin{equation}
 D(\sigma\|M(\sigma))
 =\sum_{k=r}^{n-1}D(\rho_k\|\rho_{k+1})
  +D(\rho_n\|M(\sigma)).
 \label{eq:full-pythagoras}
\end{equation}
Each one-generator step obeys
\begin{equation}
 0\le S(\rho_{k+1})-S(\rho_k)\le\log d.
 \label{eq:one-step-bound}
\end{equation}
\end{theorem}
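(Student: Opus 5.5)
The plan is to reduce everything to Theorem~\ref{thm:compatible-dephasing} and Lemma~\ref{lem:pinching-pythagoras}, using one key observation: for nested isotropic subspaces $N_k\subseteq N_{k+1}$, the pinchings compose as $\Delta_{N_{k+1}}\circ\Delta_{N_k}=\Delta_{N_{k+1}}$. To see this, use the twirling form in Eq.~\eqref{eq:correct-pinching}. Averaging over $N_{k+1}$ after averaging over $N_k$ is averaging over the sumset $N_k+N_{k+1}=N_{k+1}$, since each fiber has the same size; Weyl phases cancel under conjugation. Equivalently, in Weyl modes, $N_{k+1}^\perp\subseteq N_k^\perp$. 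Hence $\rho_{k+1}=\Delta_{N_{k+1}}(\rho_k)$.

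I would also note that $\rho_r=\Delta_{S_\sigma}(\sigma)=\sigma$. Every $z\in S_\sigma^\perp$ with $\Xi_\sigma(z)\neq0$ is already retained. More directly, each $w(u)$ with $u\in S_\sigma$ acts on $\sigma$ by a phase by Eq.~\eqref{eq:eigenconstraint}, so $w(u)\sigma w(u)^\dagger=\sigma$. This gives $S(\sigma)=S(\rho_r)$.

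Next I would apply Lemma~\ref{lem:pinching-pythagoras} to the state $\rho_k$, with the pinching $\Delta=\Delta_{N_{k+1}}$ and $\omega=M(\sigma)$. Two hypotheses need checking.
\begin{itemize}
\item $M(\sigma)$ is fixed by every compatible pinching. Its Weyl support is $S_\sigma\subseteq N_{k+1}^\perp$, because $N_{k+1}$ is isotropic and contains $S_\sigma$.
\item $\supp\rho_k\subseteq\supp M(\sigma)$. Theorem~\ref{thm:compatible-dephasing} shows that $\rho_k$ lives on the exact stabilizer sector, where $M(\sigma)$ is the normalized projector.
\end{itemize}
This yields both identities in Eq.~\eqref{eq:nested-pythagoras}.

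Monotonicity $S(\rho_k)\le S(\rho_{k+1})$ follows because $D(\rho_k\|\rho_{k+1})\ge0$. The final inequality $S(\rho_n)\le S(M(\sigma))$ is Eq.~\eqref{eq:compatible-order} with $N=\cM$. Telescoping Eq.~\eqref{eq:nested-pythagoras} from $k=r$ up to $n-1$, starting from $\rho_r=\sigma$, gives Eq.~\eqref{eq:full-pythagoras}.

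For the one-step bound in Eq.~\eqref{eq:one-step-bound}, the plan is to work in a Clifford normal form adapted to the whole flag. Extend a basis of $S_\sigma$ successively through $N_{r+1},\dots,N_n$ to a symplectic basis, as in the proof of Theorem~\ref{thm:compatible-dephasing}. This is a single basis, so one Clifford $C$ realizes Eq.~\eqref{eq:compatible-normal} simultaneously for all $k$. In that frame, $\rho_{k+1}$ is obtained from $\rho_k$ by applying $\Delta_Z$ to one additional residual qudit, call it $A$; let $B$ denote the rest. The bound $S(\Delta_Z^{A}(\tau))-S(\tau)\le\log d$ for a bipartite state $\tau_{AB}$ then follows from two standard inequalities:
\begin{align*}
S(\Delta_Z^A\tau)&\le S(\Delta_Z^A\tau_A)+S(\tau_B),\\
S(\tau)&\ge S(\tau_B)-S(\tau_A)\ge S(\tau_B)-\log d.
\end{align*}
These alone give only $2\log d$, so a sharper route is needed. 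Write $\Delta_Z^A(\tau)=\sum_j (P_j\otimes \id)\tau(P_j\otimes\id)$. The output entropy of a pinching into $d$ blocks is at most the input entropy plus the entropy $H(\{p_j\})$ of the block weights, which is at most $\log d$. This is the standard estimate $S(\sum_j p_j\tau_j)\le H(p)+\sum_j p_jS(\tau_j)$ combined with $S(\tau_j)$ averaging to at most $S(\tau)$ for a pinching. Equivalently, the pinching is an average of $d$ unitary conjugations $Z^a$, and the entropy of such a mixture exceeds $S(\tau)$ by at most $\log d$. I would use this last form as the cleanest argument. It is also independent of the Clifford frame, since $\Delta_{N_{k+1}}=\Delta_{N_{k+1}}\circ\Delta_{N_k}$ and, restricted to $\rho_k$, the twirl over $N_{k+1}/N_k\simeq\Zd$ needs only $d$ Weyl conjugations.

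The main obstacle is justifying that twirling $\rho_k$ over $N_{k+1}$ reduces to averaging over just $d$ coset representatives. This holds because conjugation by $w(u)$ with $u\in N_k$ fixes $\rho_k=\Delta_{N_k}(\sigma)$, again from the twirling form and isotropy.
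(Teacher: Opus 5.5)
Your proposal is correct and follows the paper's proof essentially step for step. The shared steps are the absorption identity $\Delta_{N_{k+1}}\circ\Delta_{N_k}=\Delta_{N_{k+1}}$, the identity $\Delta_{S_\sigma}(\sigma)=\sigma$ from the eigenconstraint, Lemma~\ref{lem:pinching-pythagoras} with $\omega=M(\sigma)$ fixed by every compatible pinching, telescoping, and the one-step bound via the $d$-element Weyl twirl over $N_{k+1}/N_k$ viewed as a mixture of $d$ unitarily equivalent states. Your explicit check of $\supp\rho_k\subseteq\supp M(\sigma)$ supplies a hypothesis the paper leaves implicit, and once you settle on the unitary-mixture argument, the detours through the flag-adapted Clifford frame and the block-weight estimate are unnecessary.
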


\begin{proof}
Equation \eqref{eq:eigenconstraint} implies \(\Delta_{S_\sigma}(\sigma)=\sigma\). Nested twirls satisfy the absorption identity
\begin{equation*}
 \Delta_{N_{k+1}}\circ\Delta_{N_k}=\Delta_{N_{k+1}},
\end{equation*}
so \(\rho_{k+1}=\Delta_{N_{k+1}}(\rho_k)\). 


Since \(M(\sigma)\) is the normalized projector onto the exact stabilizer
sector determined by \(S_\sigma\), it is fixed by every compatible pinching \(\Delta_{N_k}\). Therefore Lemma~\ref{lem:pinching-pythagoras} applies with
\(\rho=\rho_k\), \(\rho_\Delta=\rho_{k+1}\), and \(\omega=M(\sigma)\), yielding Eq.~\eqref{eq:nested-pythagoras}. 
Iterating the second identity in Eq.~\eqref{eq:nested-pythagoras} from \(k=r\) to \(k=n-1\), and using \(\rho_r=\sigma\), gives Eq.~\eqref{eq:full-pythagoras}.



For Eq.~\eqref{eq:one-step-bound}, write \(N_{k+1}=N_k+\mathrm{span}\{v_{k+1}\}\). On states fixed by \(\Delta_{N_k}\),
\begin{equation*}
 \rho_{k+1}=\frac1d\sum_{t=0}^{d-1}w(tv_{k+1})\rho_kw(tv_{k+1})^\dagger.
\end{equation*}
The entropy of a mixture of \(d\) unitarily equivalent states is at most their common entropy plus \(\log d\).
\end{proof}

The entropy hierarchy and its relative entropy decomposition are illustrated in FIG.\ref{fig:hierarchy}.

\begin{figure}[t]
\centering
\begin{tikzpicture}[>=stealth,thick,node distance=0.55cm]
\node[draw,rounded corners,inner sep=4pt] (s) {$\sigma=\rho_r$};
\node[draw,rounded corners,inner sep=4pt,below=of s] (r1) {$\rho_{r+1}$};
\node[below=0.35cm of r1] (dots) {$\vdots$};
\node[draw,rounded corners,inner sep=4pt,below=0.35cm of dots] (rn) {$\rho_n$};
\node[draw,rounded corners,inner sep=4pt,below=of rn] (m) {$M(\sigma)$};
\draw[->] (s)--node[right,scale=.72] {$\Delta_{N_{r+1}}$, $D(\rho_r\|\rho_{r+1})=\Delta S_r$} (r1);
\draw[->] (r1)--(dots);
\draw[->] (dots)--(rn);
\draw[->] (rn)--node[right,align=left,scale=.70] {mean-state completion\ $D(\rho_n\|M(\sigma))=S(M(\sigma))-S(\rho_n)$} (m);
\end{tikzpicture}
\caption{Here \(\Delta S_k:=S(\rho_{k+1})-S(\rho_k)\), and each step satisfies
\(D(\rho_k\|\rho_{k+1})=\Delta S_k\). The terminal gap satisfies
\(D(\rho_n\|M(\sigma))=S(M(\sigma))-S(\rho_n)\). Every compatible isotropic flag gives a finite-step entropy hierarchy below the mean-state ceiling. Entropy is non-decreasing, each pinching step adds at most $\log d$, and the terminal term measures residual classical non-uniformity inside the exact stabilizer sector.}
\label{fig:hierarchy}
\end{figure}

\begin{corollary}[Optimized entropy profile]
For \(k=r,\ldots,n\), define
\begin{equation}
\mathfrak{s}_k(\sigma)=\min_{\substack{N\supseteq S_\sigma,\ N\ \mathrm{isotropic}\\ \dim N=k}}
 S(\Delta_N(\sigma)).
\label{eq:hk}
\end{equation}
Then
\begin{equation*}
 S(\sigma)=\mathfrak{s}_r(\sigma)\le \mathfrak{s}_{r+1}(\sigma)\le\cdots\le \mathfrak{s}_n(\sigma)\le S(M(\sigma)),
\end{equation*}
with \(0\le \mathfrak{s}_{k+1}(\sigma)-\mathfrak{s}_k(\sigma)\le\log d\).
\end{corollary}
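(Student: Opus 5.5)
The claim follows from Theorems~\ref{thm:compatible-dephasing} and \ref{thm:hierarchy}, using two elementary facts about isotropic subspaces. I will check four things: the endpoint at $k=r$, the upper bound by $S(M(\sigma))$, the monotonicity $\mathfrak{s}_k\le\mathfrak{s}_{k+1}$, and the one-step bound $\mathfrak{s}_{k+1}-\mathfrak{s}_k\le\log d$. First, for $k=r$ the only compatible $N$ of dimension $r$ is $S_\sigma$ itself. Equation~\eqref{eq:eigenconstraint} gives $\Delta_{S_\sigma}(\sigma)=\sigma$, so $\mathfrak{s}_r(\sigma)=S(\sigma)$. Second, the feasible set in Eq.~\eqref{eq:hk} is nonempty for each $k$: by the normal-form construction in the proof of Theorem~\ref{thm:compatible-dephasing}, $S_\sigma$ can be extended inside a symplectic basis to a maximal isotropic subspace, and hence to isotropic subspaces of every intermediate dimension. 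The minima are over finite sets, so they are attained. Applying Eq.~\eqref{eq:compatible-order} to any feasible $N$ then gives $S(\sigma)\le\mathfrak{s}_k(\sigma)\le S(M(\sigma))$.

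For monotonicity, let $N'$ of dimension $k+1$ attain $\mathfrak{s}_{k+1}(\sigma)$. Since $S_\sigma\subseteq N'$ and $k\ge r$, I can pick a $k$-dimensional subspace $N$ with $S_\sigma\subseteq N\subset N'$ by taking a complement of $S_\sigma$ in $N'$ and dropping one basis vector. $N$ is isotropic because it lies inside $N'$. The absorption identity $\Delta_{N'}\circ\Delta_N=\Delta_{N'}$ from the proof of Theorem~\ref{thm:hierarchy} shows that $\Delta_{N'}(\sigma)$ is a pinching of $\Delta_N(\sigma)$, and a pinching does not decrease entropy. Hence
\[
\mathfrak{s}_k(\sigma)\le S(\Delta_N(\sigma))\le S(\Delta_{N'}(\sigma))=\mathfrak{s}_{k+1}(\sigma).
\]

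For the upper increment I argue in the opposite direction. Let $N$ of dimension $k<n$ attain $\mathfrak{s}_k(\sigma)$. I need an isotropic $N'\supset N$ of dimension $k+1$. Since $N\subsetneq N^\perp$ when $k<n$, I take any $v\in N^\perp\setminus N$ and set $N'=N+\mathrm{span}\{v\}$. This $N'$ is isotropic because $\langle v,v\rangle_s=0$ and $v\perp N$, and it is compatible because it contains $S_\sigma$. Then, exactly as in Eq.~\eqref{eq:one-step-bound}, $\Delta_{N'}(\sigma)$ is a uniform mixture of $d$ Weyl conjugates of $\Delta_N(\sigma)$. Therefore
\[
\mathfrak{s}_{k+1}(\sigma)\le S(\Delta_{N'}(\sigma))\le S(\Delta_N(\sigma))+\log d=\mathfrak{s}_k(\sigma)+\log d.
\]
Combined with monotonicity, this gives $0\le \mathfrak{s}_{k+1}(\sigma)-\mathfrak{s}_k(\sigma)\le\log d$.

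The main subtlety is that the two minimizers differ. The lower bound on the increment has to start from the $(k+1)$-dimensional optimizer and restrict to a subspace. The upper bound has to start from the $k$-dimensional optimizer and extend it. The corresponding existence facts, a compatible subspace inside any compatible subspace and an isotropic extension of any non-maximal isotropic subspace, are the points where care is needed.
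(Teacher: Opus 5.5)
Your proposal is correct and takes the same route as the paper. Monotonicity comes from restricting the $(k+1)$-dimensional optimizer to a compatible $k$-dimensional subspace, and the $\log d$ increment bound comes from extending the $k$-dimensional optimizer by some $v\in N^\perp\setminus N$ and using the $d$-term Weyl-twirl mixture bound. You also make explicit what the paper leaves implicit: the $k=r$ endpoint, the ceiling via Eq.~\eqref{eq:compatible-order}, nonemptiness of the feasible sets, and monotonicity obtained directly from the absorption identity rather than by embedding into a full flag.
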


\begin{proof}
Every compatible \((k+1)\)-dimensional isotropic subspace contains a compatible \(k\)-dimensional one, so Theorem~\ref{thm:hierarchy} gives monotonicity after minimization. Conversely, let \(N_k\) be an optimizer at level \(k<n\). Since \(N_k\)
is a non-maximal isotropic subspace of the finite symplectic space \(V^n\),
it can be extended by one isotropic generator: there exists
\(v\in N_k^\perp\setminus N_k\) such that
\[
N_{k+1}:=N_k+\operatorname{span}\{v\}
\]
is isotropic. Moreover, \(N_{k+1}\) still contains \(S_\sigma\). Applying
Eq.~\eqref{eq:one-step-bound} to this one-generator extension gives
\[
s_{k+1}(\sigma)
\le S(\Delta_{N_{k+1}}(\sigma))
\le S(\Delta_{N_k}(\sigma))+\log d
=
s_k(\sigma)+\log d .
\]
\end{proof}

\section{Stabilizer spectral transfer and communication}


The hierarchy itself is a state-space statement. The spectral-transfer argument yields a strengthening in the communication setting: compatibility with the exact stabilizer skeleton is not required for the communication lower bound itself. The construction below applies to
arbitrary complete stabilizer dephasings.



\begin{theorem}[Spectral-transfer Holevo lower bound]\label{thm:holevo-bound}
Let \(G\) be positive and let \(\sigma\) be an \(n\)-qudit state. Then
\begin{align}\label{eq:holevo-sandwich}
n\log d-
 \min_{\cM}
 S(\Delta_{\cM}(\sigma))
 \le
 \chi(\Lambda_\sigma)
 \le
 n\log d-S(\sigma),   
\end{align}
where the minimum is taken over all maximal isotropic subspaces. Moreover, the mean-state hierarchy gives the canonical compatible bound
\begin{equation}
 \chi(\Lambda_\sigma)
 \ge n\log d-\sprof_n(\sigma)
 \ge n\log d-S(M(\sigma)).
 \label{eq:mean-bound-refined}
\end{equation}
\end{theorem}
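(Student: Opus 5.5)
The plan is to reduce both inequalities in \eqref{eq:holevo-sandwich} to statements about $\min_{\ket\psi}S(\Lambda_\sigma(\proj\psi))$ via Corollary~\ref{cor:moe}. Its hypothesis holds because a positive $G$ is in particular even-parity positive. After that reduction, \eqref{eq:mean-bound-refined} will follow by restricting the minimization and invoking the optimized entropy profile.

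\emph{Upper bound.} I will show that $S(\Lambda_\sigma(\proj\psi))\ge S(\sigma)$ for every pure input.
\begin{itemize}
\item The duality \eqref{eq:duality} is symmetric in the two inputs. For a fixed pure $\psi$, the map $\sigma\mapsto\Lambda_\sigma(\psi)$ is therefore itself a quantum convolutional channel $\Phi_\psi$, with environment $\psi$ and with the roles of the two columns of $G$ exchanged. I will check that it is a genuine channel by writing it as a partial trace after a swap composed with $U_G$.
\item Its characteristic-function action is $\Xi_\sigma\mapsto\Xi_\sigma(-N_Gg_{10}p,g_{01}q)\cdot\Xi_\psi(\cdots)$.
\item Because $g_{10},g_{01}\neq0$, the argument map is invertible. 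Hence $\Phi_\psi$ sends $\id/d^n$ to $\id/d^n$, exactly as in the proof of Corollary~\ref{cor:moe}, so $\Phi_\psi$ is unital.
\item Unital channels do not decrease von Neumann entropy. This gives $S(\Lambda_\sigma(\psi))\ge S(\sigma)$, and substituting into \eqref{eq:moe} yields the upper bound.
\end{itemize}

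\emph{Lower bound (spectral transfer).} Fix a maximal isotropic $\cM$. I will construct a pure stabilizer input $\psi$ whose output spectrum coincides with that of $\Delta_\cM(\sigma)$.
\begin{itemize}
\item Set $L(p,q)=(N_Gg_{11}p,g_{00}q)$ and $T(p,q)=(-N_Gg_{10}p,g_{01}q)$. Both are invertible diagonal scalings, and each multiplies $\langle\cdot,\cdot\rangle_s$ by a nonzero scalar.
\item Consequently $\cM':=LT^{-1}(\cM)$ is again maximal isotropic. Take $\psi$ to be a stabilizer state of $\cM'$, so that $\Xi_\psi$ is a unit-modulus character $\chi$ on $\cM'$ and vanishes off $\cM'$.
\item By \eqref{eq:duality}, the output characteristic function is supported on $T^{-1}(\cM)$, which is maximal isotropic, and equals $\chi(Lz)\,\Xi_\sigma(Tz)$ there.
\item The output is therefore diagonal in the stabilizer basis of $T^{-1}(\cM)$. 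Its eigenvalues are the Fourier transform over $\widehat{T^{-1}\cM}$ of $\chi(L\cdot)\,\Xi_\sigma(T\cdot)$.
\item By \eqref{eq:correct-pinching}, $\Delta_\cM(\sigma)$ is diagonal in the stabilizer basis of $\cM$, with eigenvalues given by the Fourier transform of $\Xi_\sigma|_\cM$.
\item The isomorphism $T:T^{-1}\cM\to\cM$ induces a bijection of character groups. Multiplying by the character $\chi\circ L$ only translates the dual variable. Hence the two eigenvalue lists are permutations of each other.
\item It follows that $S(\Lambda_\sigma(\psi))=S(\Delta_\cM(\sigma))$. Minimizing over $\cM$ and inserting into \eqref{eq:moe} gives the left inequality.
\end{itemize}

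\emph{Compatible bound.} Restricting the minimum to $\cM\supseteq S_\sigma$ gives the first inequality of \eqref{eq:mean-bound-refined}, since this restricted minimum is $\sprof_n(\sigma)$ by definition \eqref{eq:hk}. The second inequality, $\sprof_n(\sigma)\le S(M(\sigma))$, is the optimized entropy profile corollary.

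I expect the main obstacle to be the spectral-transfer bookkeeping. One must track the phases of $w(\cdot)$ under the non-symplectic scalings $L$ and $T$; I plan to avoid needing a Clifford implementation of $T$ by comparing eigenvalues directly through Fourier transforms on the isotropic subspaces. One must also confirm that the stabilizer character $\chi$ is consistent with the Weyl cocycle on $\cM'$.
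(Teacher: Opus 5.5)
Your proposal is correct and follows the paper's proof. The pure stabilizer input on $A_G B_G^{-1}(\cM)$ (your $LT^{-1}(\cM)$), the output Weyl support $K_\cM=B_G^{-1}(\cM)$ with eigenvalues permuted through the character bijection induced by $B_G$ and the input's character twist, and the restriction to $\cM\supseteq S_\sigma$ for the compatible bound are exactly the paper's steps. The only difference is the upper bound: you prove $S(\Lambda_\sigma(\proj\psi))\ge S(\sigma)$ directly, from unitality of the environment map $\sigma\mapsto\Lambda_\sigma(\proj\psi)$ (using $g_{01},g_{10}\neq 0$) and monotonicity of $D(\cdot\,\|\,\id/d^n)$, whereas the paper simply cites the convolution entropy inequality of Bu et al. Your version is self-contained, and the swap is unnecessary, since appending $\proj\psi$, conjugating by $U_G$ and tracing out $B$ already makes this map CPTP.
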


\begin{proof}
The entropy inequality for quantum convolution \cite{Bu2023discrete} gives
\(S(\Lambda_\sigma(\proj\psi))\ge S(\sigma)\) for every pure input. Together with Eq.~\eqref{eq:moe}, this proves the upper bound.

Fix a maximal isotropic subspace \(\mathcal{M}\). No compatibility assumption $\mathcal{M}\supset S_{\sigma}$ is required here because the construction only relies on complete stabilizer measurements. Define
\begin{align*}
 A_G(p,q)=(N_Gg_{11}p,g_{00}q),B_G(p,q)=(-N_Gg_{10}p,g_{01}q),
\end{align*}
and set
\[
K_\cM:=B_G^{-1}(\cM),\qquad L_\cM:=A_G(K_\cM).
\]
By Lemma~\ref{lem:AGBG-similitudes}, positivity of \(G\) makes both \(A_G\) and \(B_G\) invertible symplectic similitudes. Hence, preimages and images of maximal isotropic subspaces under these maps are again maximal isotropic, so \(K_\cM\) and \(L_\cM\) are maximal isotropic.


On one hand, for \(\mu\in\widehat \cM\), define the joint spectral projector $P_\mu^{(\cM)}=d^{-n}\sum_{u\in \cM}\mu(u)\,w(u),$
\begin{align*}
 \Delta_\cM(\sigma)=\sum_{\mu\in\widehat\cM}p_\mu P_\mu^{(\cM)}=d^{-n}\sum_{\mu\in\widehat\cM}p_\mu\sum_{m\in\cM}\mu(m)w(m).
\end{align*}
Comparing with Eq.(\ref{eq:correct-pinching}) gives $\Xi_{\sigma}(m)=\sum_{\mu\in\widehat{\cM}}\mu(m)p_{\mu}$ for each $m\in\cM$.
On the other hand, choose the pure stabilizer input \(Q_\eta^{(L_\mathcal{M})}\) associated with a character \(\eta\in\widehat{L_\mathcal{M}}\).
Equation~\eqref{eq:duality} restricts the output Weyl expansion to \(K_\mathcal{M}\).
Substitution of the Fourier expansion of \(p_\mu\) yields
\begin{align*}\label{eq:spectral-transfer}
    \Lambda_\sigma(Q_\eta^{(L_\cM)})=&d^{-n}\sum_{x\in K_\cM}\eta(A_Gx)\Xi_{\sigma}(B_Gx)w(x)\\
    =&\sum_{\mu\in\widehat\cM}p_{\mu}d^{-n}\sum_{x\in K_\cM}\eta(A_Gx)\mu(B_Gx)w(x).
\end{align*}
Denote \(R_{\eta,\mu}^{(K_\cM)}:=d^{-n}\sum_{x\in K_\cM}\eta(A_Gx)\mu(B_Gx)w(x)\), then 
\(\{R_{\eta,\mu}^{(K_\cM)}\}_\mu\) is a complete orthogonal family of rank-one stabilizer projectors and they satisfy
$\sum_\mu R_{\eta,\mu}^{(K_\cM)}=I.$ Distinct \(\mu\) give distinct characters because \(B_G:K_\cM\to\cM\) is an isomorphism. Hence 
\begin{equation*}
 S(\Lambda_\sigma(Q_\eta^{(L_\cM)}))=S(\Delta_\cM(\sigma)).
\end{equation*}
The minimum output entropy is therefore no larger than the right-hand side, and optimization over \(\cM\), followed by Eq.~\eqref{eq:moe}, proves the lower bound in Eq.~\eqref{eq:holevo-sandwich}. 

The compatible bound follows because the minimization defining \(\sprof_n(\sigma)\) is restricted to maximal isotropic subspaces containing the exact stabilizer skeleton, where \(\sprof_n(\sigma)\) is the endpoint of the compatible entropy profile defined in Eq.(\ref{eq:hk}).
\end{proof}


The bound is operationally measurable: for each maximal isotropic \(\cM\), \(S(\Delta_\cM(\sigma))\) is exactly the Shannon entropy of the outcome distribution of the associated stabilizer measurement. The theorem therefore converts a finite set of environment measurements into certified classical communication rates.

\section{Exact, robust, and single-letter formulae}

\begin{theorem}[Stabilizer-diagonal environments]
\label{thm:stabilizer-diagonal}
Let \(G\) be positive. If \(\sigma\) is diagonal in the stabilizer basis for a maximal isotropic subspace \(\mathcal M_0\), namely
\[
\sigma=\sum_{\mu\in\widehat{\cM_0}}\lambda_\mu P_\mu^{(\cM_0)},
\]
then
\begin{align*}
 C_H(\Lambda_\sigma)&=\chi(\Lambda_\sigma)=n\log d-S(\sigma)\notag\\
 &=n\log d-\min_{\substack{\cM\ {\rm maximal}\\ {\rm  isotropic}}}S(\Delta_\cM(\sigma)).
\end{align*}
\end{theorem}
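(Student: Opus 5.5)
The proof sandwiches $\chi(\Lambda_\sigma)$ between the two sides of Theorem~\ref{thm:holevo-bound} and shows that they coincide. It then upgrades the one-shot quantity to the regularized capacity using Theorem~\ref{thm:stabilizer-additivity}.

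\textbf{Lower and upper bounds meet.} Take $\cM=\cM_0$ in the lower bound of Eq.~\eqref{eq:holevo-sandwich}. Since $\sigma=\sum_\mu\lambda_\mu P_\mu^{(\cM_0)}$ is a combination of the spectral projectors of $\cM_0$, and the pinching $\Delta_{\cM_0}$ fixes each $P_\mu^{(\cM_0)}$ by orthogonality of the family, we get $\Delta_{\cM_0}(\sigma)=\sigma$. Hence
\[
\min_{\cM}S(\Delta_\cM(\sigma))\le S(\Delta_{\cM_0}(\sigma))=S(\sigma).
\]
Every pinching is a random-unitary channel by Eq.~\eqref{eq:correct-pinching}, so $S(\Delta_\cM(\sigma))\ge S(\sigma)$ for every $\cM$. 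The minimum over $\cM$ therefore equals $S(\sigma)$ and is attained at $\cM_0$. The two sides of Eq.~\eqref{eq:holevo-sandwich} now read $n\log d-S(\sigma)\le\chi(\Lambda_\sigma)\le n\log d-S(\sigma)$. This gives the one-shot identity together with the last equality in the statement. The upper bound uses only the convolution entropy inequality $S(\Lambda_\sigma(\proj\psi))\ge S(\sigma)$ quoted in the proof of Theorem~\ref{thm:holevo-bound}, together with Corollary~\ref{cor:moe}. Positivity of $G$ supplies the even-parity hypothesis that Corollary~\ref{cor:moe} needs.

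\textbf{Additivity.} To get $C_H=\chi$, I will check that $\sigma$ is a convex combination of pure stabilizer states. Each $P_\mu^{(\cM_0)}$ is rank one, because $\cM_0$ is maximal isotropic (rank $d^{n-n}=1$), and it is a pure stabilizer state. The coefficients $\lambda_\mu$ are the eigenvalues of the state $\sigma$, so they are nonnegative and sum to one. Theorem~\ref{thm:stabilizer-additivity} then says $\Lambda_\sigma$ is entanglement-breaking and $C_H(\Lambda_\sigma)=\chi(\Lambda_\sigma)$.

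\textbf{Main obstacle.} The only delicate point is the upper bound, which rests on the cited entropy inequality for quantum convolution. I will simply invoke it as in the proof of Theorem~\ref{thm:holevo-bound}. The rest is bookkeeping: pinching fixes $\sigma$, pinching cannot decrease entropy, and the spectral projectors of a maximal isotropic subspace are rank one.
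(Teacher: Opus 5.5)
Your proposal is correct and follows the paper's proof essentially step for step. The identity $\Delta_{\cM_0}(\sigma)=\sigma$, together with the fact that pinching never decreases entropy, closes the sandwich of Theorem~\ref{thm:holevo-bound}, and writing $\sigma$ as a convex mixture of rank-one stabilizer projectors lets Theorem~\ref{thm:stabilizer-additivity} give $C_H=\chi$.
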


\begin{proof}
Pinching never decreases entropy, while \(\Delta_{\cM_0}(\sigma)=\sigma\). Thus the dephasing minimum equals \(S(\sigma)\), and Theorem~\ref{thm:holevo-bound} gives the one-shot equality. Stabilizer-diagonal states are convex mixtures of pure stabilizer states, so Theorem~\ref{thm:stabilizer-additivity} gives additivity.
\end{proof}

The minimum is attained at the stabilizer basis diagonalizing \(\sigma\). Exact additivity need not be stable under perturbation, but the one-shot formula has a robust neighborhood.

\begin{theorem}[Robust one-shot neighborhood]
\label{thm:robust}
Let G be positive and let \(\tau\) be stabilizer diagonal, and suppose
\begin{equation*}
 \frac12\|\sigma-\tau\|_1\le\varepsilon\le1-\frac1{d^n}.
\end{equation*}
Then
\begin{equation}
 \left|\chi(\Lambda_\sigma)-\bigl[n\log d-S(\tau)\bigr]\right|
 \le \varepsilon\log({d^n}-1)+\hbin(\varepsilon).
 \label{eq:robust-holevo}
\end{equation}
\end{theorem}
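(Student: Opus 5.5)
The plan is to sandwich \(\chi(\Lambda_\sigma)\) between the two bounds of Theorem~\ref{thm:holevo-bound} and then use the Fannes--Audenaert continuity inequality to compare each bound with \(n\log d-S(\tau)\). Because \(\tau\) is stabilizer diagonal, Theorem~\ref{thm:stabilizer-diagonal} gives \(\chi(\Lambda_\tau)=n\log d-S(\tau)\). For the state \(\sigma\), Eq.~\eqref{eq:holevo-sandwich} says
\[
n\log d-S(\Delta_{\cM_0}(\sigma))\le\chi(\Lambda_\sigma)\le n\log d-S(\sigma),
\]
where \(\cM_0\) is the maximal isotropic subspace diagonalizing \(\tau\). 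It therefore suffices to show two things: \(S(\tau)-S(\sigma)\) is at most the error term, and \(S(\Delta_{\cM_0}(\sigma))-S(\tau)\) is at most the same error term.

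\emph{Upper side.} Apply Audenaert's sharp Fannes inequality in dimension \(D=d^n\). For \(T=\frac12\|\sigma-\tau\|_1\le\varepsilon\) it states
\[
|S(\sigma)-S(\tau)|\le T\log(D-1)+\hbin(T).
\]
The function \(T\mapsto T\log(D-1)+\hbin(T)\) is nondecreasing on \([0,1-1/D]\); this is exactly why the hypothesis \(\varepsilon\le 1-d^{-n}\) appears. Hence the right-hand side can be replaced by \(\varepsilon\log(d^n-1)+\hbin(\varepsilon)\). This gives
\[
\chi(\Lambda_\sigma)-\bigl[n\log d-S(\tau)\bigr]\le S(\tau)-S(\sigma)\le\varepsilon\log(d^n-1)+\hbin(\varepsilon).
\]

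\emph{Lower side.} The pinching \(\Delta_{\cM_0}\) is a channel, so the trace norm contracts:
\[
\tfrac12\|\Delta_{\cM_0}(\sigma)-\tau\|_1=\tfrac12\|\Delta_{\cM_0}(\sigma)-\Delta_{\cM_0}(\tau)\|_1\le\varepsilon,
\]
using that \(\Delta_{\cM_0}(\tau)=\tau\). The same continuity estimate and the same monotonicity then give
\[
S(\Delta_{\cM_0}(\sigma))-S(\tau)\le\varepsilon\log(d^n-1)+\hbin(\varepsilon).
\]
Combined with the lower bound of Theorem~\ref{thm:holevo-bound}, this yields
\[
\bigl[n\log d-S(\tau)\bigr]-\chi(\Lambda_\sigma)\le\varepsilon\log(d^n-1)+\hbin(\varepsilon).
\]
Alternatively, both diagonal states \(\Delta_{\cM_0}(\sigma)\) and \(\tau\) have spectra given by classical distributions on \(\widehat{\cM_0}\), so the classical Fannes--Audenaert bound applies directly.

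The only step needing care is the monotonicity of \(T\log(D-1)+\hbin(T)\) on \([0,1-1/D]\). Its derivative is \(\log(D-1)+\log\frac{1-T}{T}\), which is nonnegative exactly when \(T\le 1-1/D\). This lets us pass from the actual trace distances, which may be smaller than \(\varepsilon\), to the bound stated in Eq.~\eqref{eq:robust-holevo}. The positivity of \(G\) is used only to invoke Theorems~\ref{thm:holevo-bound} and~\ref{thm:stabilizer-diagonal}.
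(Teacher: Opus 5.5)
Your proof is correct, but it takes a different route from the paper's.

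\textbf{The paper's route (channel level).} For every pure input $\psi$ it bounds $\frac12\|\Lambda_\sigma(\proj\psi)-\Lambda_\tau(\proj\psi)\|_1\le\varepsilon$ using unitary invariance and contractivity under partial trace. It applies Audenaert's bound to these two outputs, uniformly in $\psi$. It concludes that the minimum output entropies of $\Lambda_\sigma$ and $\Lambda_\tau$ differ by at most the error term. It then uses Eq.~\eqref{eq:moe} together with Theorem~\ref{thm:stabilizer-diagonal} applied to $\tau$.

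\textbf{Your route (environment level).} You apply the two-sided sandwich of Theorem~\ref{thm:holevo-bound} directly to $\sigma$, with $\cM_0$ inserted into the minimum. You then compare each endpoint, $S(\sigma)$ and $S(\Delta_{\cM_0}(\sigma))$, with $S(\tau)$. For the second comparison you use contractivity under $\Delta_{\cM_0}$ together with $\Delta_{\cM_0}(\tau)=\tau$.

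\textbf{What each approach buys.}
\begin{itemize}
\item The paper's argument proves a more general continuity fact. Its channel-level step needs only the minimal-output-entropy form (Corollary~\ref{cor:moe}) and gives $|\chi(\Lambda_\sigma)-\chi(\Lambda_\tau)|\le\varepsilon\log(d^n-1)+\hbin(\varepsilon)$ for arbitrary environments $\sigma,\tau$. Stabilizer-diagonality of $\tau$ enters only when evaluating $\chi(\Lambda_\tau)$.
\item Your argument instead uses the spectral-transfer lower bound and the convolution entropy inequality on $\sigma$ itself. In return it is more informative. It localizes $\chi(\Lambda_\sigma)$ in the computable window $[n\log d-S(\Delta_{\cM_0}(\sigma)),\,n\log d-S(\sigma)]$. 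The one-sided deviations are therefore bounded by $S(\tau)-S(\sigma)$ and $S(\Delta_{\cM_0}(\sigma))-S(\tau)$, which may be much smaller than the worst-case bound. The second comparison is also purely classical, between two distributions on $\widehat{\cM_0}$.
\end{itemize}

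You also make explicit that $T\mapsto T\log(D-1)+\hbin(T)$ is monotone on $[0,1-1/D]$. The paper's proof uses this silently when it replaces the actual trace distance by $\varepsilon$, and it is exactly why the hypothesis $\varepsilon\le 1-d^{-n}$ is needed.
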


\begin{proof}
For every pure input \(\proj\psi\), unitary invariance and contractivity of trace distance under partial trace give
\begin{equation*}
 \frac12\|\Lambda_\sigma(\proj\psi)-\Lambda_\tau(\proj\psi)\|_1\le\varepsilon.
\end{equation*}
The sharp entropy continuity bound \cite{Audenaert2007jpa} therefore implies that the two output entropies differ by at most the right hand of Eq.(\ref{eq:robust-holevo}), uniformly in \(\psi\). Their minimum output entropies differ by the same amount. Equation \eqref{eq:moe} and Theorem~\ref{thm:stabilizer-diagonal} complete the proof.
\end{proof}

The regularized capacity of a nearby nonstabilizer channel may still be nonadditive; Eq.~\eqref{eq:robust-holevo} is deliberately a one-shot statement.

\begin{problem}[Single-letter stabilizer-dephasing formula]
For which environment states is the minimum output entropy attained by a pure stabilizer input, equivalently,
\begin{equation}
 \chi(\Lambda_\sigma)
 =n\log d-\min_{\cM}S(\Delta_\cM(\sigma))?
 \label{eq:single-letter-problem}
\end{equation}
where the minimum is taken over all maximal isotropic subspaces. Even when Eq.~\eqref{eq:single-letter-problem} holds, additivity is a separate question.
\end{problem}

\subsection{An exact qutrit nonstabilizer family}

Let \(d=3\), \(n=1\), and
\begin{equation*}
 G_H=\begin{pmatrix}1&1\\1&-1\end{pmatrix}.
\end{equation*}
For
\begin{align}
 \ket{\psi_p}=\sqrt p\ket0+\sqrt{1-p}\ket1,\sigma_p=\proj{\psi_p},0\le p\le1.
 \label{eq:qutrit-family}
\end{align}
write \(q=1-p\).

\begin{theorem}[Exact qutrit single-letter formula]
For the family \eqref{eq:qutrit-family},
\begin{equation}
 \chi(\Lambda_{\sigma_p})
 =\log3-\hbin(p)
 =\log3-\min_{\substack{\cM\ {\rm maximal}\\ {\rm  isotropic}}}S(\Delta_\cM(\sigma_p)),
 \label{eq:qutrit-holevo}
\end{equation}
where \(\hbin(p)\) is the binary Shannon entropy.
\end{theorem}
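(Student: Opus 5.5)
The plan is to prove the two inequalities in Eq.~\eqref{eq:qutrit-holevo} separately. The lower bound on $\chi(\Lambda_{\sigma_p})$ comes from Theorem~\ref{thm:holevo-bound}. The matching upper bound comes from a minimal-output-entropy estimate fed into Corollary~\ref{cor:moe}. The matrix $G_H$ is positive over $\mathbb{Z}_3$, with $\det G_H=-2=1$ and hence $N_G=1$, so both results apply.

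\emph{Lower bound and the stabilizer minimum.} Take $\cM_0=\mathrm{span}\{(1,0)\}$, the computational $Z$-basis. Then $\Delta_{\cM_0}(\sigma_p)=\mathrm{diag}(p,q,0)$ has entropy $\hbin(p)$, and Theorem~\ref{thm:holevo-bound} gives $\chi(\Lambda_{\sigma_p})\ge\log3-\hbin(p)$. For the second equality in Eq.~\eqref{eq:qutrit-holevo}, I would check that the other three qutrit stabilizer bases (the $X$-basis and the two remaining mutually unbiased bases) give outcome distributions of entropy at least $\hbin(p)$. Since $\sigma_p$ is pure, each such distribution is $\bigl(|\sqrt p+\omega_3^{a}\sqrt q|^2/3\bigr)_a$ up to the usual phases, and a short majorization comparison should settle it. This check is in fact automatic once the upper bound below is proved, because the stabilizer minimum is always sandwiched between the true minimal output entropy and $\hbin(p)$.

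\emph{Upper bound.} By Corollary~\ref{cor:moe} it suffices to show $S(\Lambda_{\sigma_p}(\proj\phi))\ge\hbin(p)$ for every pure $\ket\phi$. From Eq.~\eqref{eq:UG} one gets $U_{G_H}\ket{i,j}=\ket{-i-j,\,-i+j}$. Tracing out $B$ against $\ket{\psi_p}$ then gives the Kraus operators $A_k=X^{-k}D_k$, where $D_k=\mathrm{diag}(\psi(i+k))_i$ with $\psi(0)=\sqrt p$, $\psi(1)=\sqrt q$, $\psi(2)=0$, and $\sum_kD_k^2=\id$. The output is therefore
\[
 \Lambda_{\sigma_p}(\proj\phi)=\sum_{k}X^{-k}D_k\proj\phi D_kX^{k}.
\]
Write its spectrum as $\lambda_1\ge\lambda_2\ge\lambda_3$. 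If $\lambda_1\le\max(p,q)$, then $(\lambda_1,\lambda_2,\lambda_3)$ is majorized by $(\max(p,q),\min(p,q),0)$, and Schur concavity of entropy gives $S\ge\hbin(p)$. So the whole proof reduces to the maximal output fidelity bound
\[
 \max_{\phi,\chi}\sum_k\bigl|\langle\chi|X^{-k}D_k|\phi\rangle\bigr|^2\le\max(p,q).
\]

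\emph{Main obstacle.} The hard step is this fidelity bound. I would first fix $\chi$ and rewrite the quantity as the top eigenvalue of the $3\times3$ positive matrix $T_\chi=\sum_kD_kX^{k}\proj\chi X^{-k}D_k$. Its entries are $(T_\chi)_{ij}=\sum_k\psi(i+k)\psi(j+k)\chi(i-k)\overline{\chi(j-k)}$.

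Because $\psi$ vanishes at $2$, each row of $T_\chi$ receives contributions from only two shifts $k$. I would bound $\lambda_{\max}(T_\chi)$ by Cauchy--Schwarz, pairing the $\sqrt p$ and $\sqrt q$ contributions and using $\sum_i|\chi(i)|^2=1$. If this does not close, the fallback is an explicit Lagrange-multiplier analysis of the $3\times3$ problem, using the diagonal phase symmetry (Weyl covariance) to make the amplitudes of $\phi$ and $\chi$ real. One must also check that equality is attained at stabilizer inputs, consistent with the lower bound. Once the fidelity bound holds, the upper bound follows and the two bounds coincide.
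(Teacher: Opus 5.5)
\paragraph{Verdict.} There is a genuine gap: the minimal-output-entropy upper bound rests on a maximal-output-eigenvalue bound that is false.

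\paragraph{What is correct.} Your lower bound, via Theorem~\ref{thm:holevo-bound} with the computational basis, is fine. So is your observation that the equality with the stabilizer minimum follows from the spectral-transfer sandwich once the minimum output entropy is known. That observation even spares you the paper's separate collision-entropy check of the other three stabilizer bases.

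\paragraph{Where it fails.} You reduce the upper bound to $\max_{\phi}\lambda_{\max}\bigl(\Lambda_{\sigma_p}(\proj\phi)\bigr)\le\max(p,q)$. This is not merely the hard step; it is false. Take $\ket\phi=(\ket0+\ket1+\ket2)/\sqrt3$. With your Kraus operators the output has entries $\frac13\sum_m\psi(m)\psi(m+a'-a)$. So it is the circulant $\frac13\bigl[\id+\sqrt{pq}\,(X+X^\dagger)\bigr]$, with spectrum $\bigl(\frac{1+2\sqrt{pq}}{3},\frac{1-\sqrt{pq}}{3},\frac{1-\sqrt{pq}}{3}\bigr)$. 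This coincides with the $X$-basis measurement distribution of $\sigma_p$, as spectral transfer predicts. At $p=1/2$ the spectrum is $(2/3,1/6,1/6)$, whose top eigenvalue exceeds $\max(p,q)=1/2$. The violation persists on an open interval, roughly $0.35<p<0.65$. Since $(2/3,1/6,1/6)$ is not majorized by $(1/2,1/2,0)$, no Schur-concavity argument can give $S\ge\hbin(p)$ for all inputs. Your Cauchy--Schwarz or Lagrange-multiplier fallback attacks the same false inequality and would only rediscover this counterexample. The target entropy inequality itself survives ($H(2/3,1/6,1/6)\approx1.25>1$), so you need to control something other than the top eigenvalue.

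\paragraph{How the paper closes it.} The paper bounds the purity instead. From the Gram matrix of your vectors $X^{-k}D_k\ket\phi$ one finds $\Tr[\Lambda_{\sigma_p}(\proj\phi)^2]=p^2+q^2-2(p-q)^2(xy+yz+zx)\le p^2+q^2$, where $x,y,z$ are the input populations. At $p=1/2$ every output has purity exactly $1/2$, which is why a $2$-norm bound succeeds where the $\infty$-norm bound fails. The remaining step is qutrit-specific and is not a majorization argument. At fixed trace and purity, the entropy of a qutrit spectrum increases with the determinant $\lambda_1\lambda_2\lambda_3$. Hence purity at most $r^2+(1-r)^2$, with $r=\max(p,q)$, forces $S\ge\hbin(r)=\hbin(p)$. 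Equality holds at $(r,1-r,0)$, which computational-basis inputs attain. Replacing your fidelity bound with this purity bound and the determinant lemma repairs the argument; the rest of your outline can stay.
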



\begin{proof}
Computational-basis dephasing gives probabilities \((p,q,0)\), hence entropy \(\hbin(p)\). The other three qutrit stabilizer bases are mutually unbiased with respect to the computational basis, as is expected for the complete set of qutrit stabilizer measurements \cite{Durt2010IJTP}. Their probabilities have the form
\begin{equation*}
 r_k(\theta)=\frac{1+2\sqrt{pq}\cos(\theta+2\pi k/3)}{3},k=0,1,2.
\end{equation*}
Since Shannon entropy dominates collision entropy,
\begin{align*}
 H(r_0,r_1,r_2)
 &\ge-\log\sum_{k=0}^2r_k^2=\log3-\log(1+2pq)\ge\hbin(p).
\end{align*}
The last inequality is proved in Appendix~\ref{app:qutrit-entropy}. Hence the dephasing minimum is \(\hbin(p)\).

For an arbitrary pure input \(\ket\phi=a\ket0+b\ket1+c\ket2\), let \(x=|a|^2\), \(y=|b|^2\), and \(z=|c|^2\). Direct evaluation of \(U_{G_H}\) gives
\begin{align*}
 \Tr\!\left[\Lambda_{\sigma_p}(\proj\phi)^2\right]
 &=p^2+q^2-2(p-q)^2(xy+yz+zx)\notag\\
 &\le p^2+q^2.
\end{align*}
Applying the qutrit spectral lemma in Appendix~\ref{app:qutrit-entropy} with \(r=\max\{p,q\}\), any qutrit state with purity at most \(p^2+q^2=r^2+(1-r)^2\) has entropy at least \(h_2(r)=h_2(p)\).
Computational-basis inputs attain the rank-two output spectrum \((p,q,0)\). Therefore the minimum output entropy is \(\hbin(p)\), and Eq.~\eqref{eq:moe} proves Eq.~\eqref{eq:qutrit-holevo}.
\end{proof}

For every \(0<p<1\), the exact skeleton of \(\sigma_p\) is trivial, so \(M(\sigma_p)=\id/3\) and the earlier mean-state bound is zero. Equation \eqref{eq:qutrit-holevo} is therefore a strict refinement throughout the open interval.

\section{Coexistence of classical and quantum communication}

The qutrit family also yields a complete analytic coherent-information sign pattern for two fixed input states. Define
\begin{align*}
& \rho_0=\frac34\proj0+\frac14\proj1
 +\frac14\ketbra{0}{1}+\frac14\ketbra{1}{0},\\ 
& R\ket j=\ket{1-j}, \rho_1=R\rho_0R^\dagger.
\end{align*}
The unitary \(R\) is a qutrit Clifford permutation.

\begin{theorem}[Coherent-information sign pattern for a fixed input pair]
Let
\begin{align}
 \bm\lambda(p)&=\bigl(\lambda_0(p),\lambda_+(p),\lambda_-(p)\bigr),
 \label{eq:lambda-p}
\end{align}
with $\lambda_0(p)=\frac{3-2p}{4}$ and $\lambda_\pm(p)=\frac{1+2p\pm\sqrt{12p^2-4p+1}}{8}$. Then
\begin{align}
 I_c(\rho_0,\Lambda_{\sigma_p})
 &=H(\bm\lambda(p))-H(\bm\lambda(1-p)),\label{eq:ic0}\\
 I_c(\rho_1,\Lambda_{\sigma_p})
 &=-I_c(\rho_0,\Lambda_{\sigma_p}).
 \label{eq:ic1}
\end{align}
Moreover,
\begin{equation}
 I_c(\rho_0,\Lambda_{\sigma_p})
 \begin{cases}>0,&0<p<\frac12,\\=0,&p\in\{0,\frac12,1\},\\<0,&\frac12<p<1.
 \end{cases}
 \label{eq:ic-sign}
\end{equation}
Consequently,
\[
I_c^{\rm lb}(p):=
\max\{I_c(\rho_0,\Lambda_{\sigma_p}),
I_c(\rho_1,\Lambda_{\sigma_p})\}
=
|I_c(\rho_0,\Lambda_{\sigma_p})|
\]
is positive for every \(p\in(0,1)\setminus\{1/2\}\). At \(p=1/2\),
\begin{equation}
 \Lambda_{\sigma_{1/2}}^c(\rho)=X\Lambda_{\sigma_{1/2}}(\rho)X^\dagger
 \label{eq:self-complementary}
\end{equation}
for every input \(\rho\). Hence the channel is unitarily self-complementary and
\(Q(\Lambda_{\sigma_{1/2}})=0\). At \(p=0,1\), the environment is stabilizer and the channel is entanglement breaking, so its quantum capacity also vanishes.
\end{theorem}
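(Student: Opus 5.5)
The plan is to make everything explicit through a Stinespring isometry and then reduce each claim to a finite spectral computation. For $d=3$ and $G_H$ we have $\det G_H=-2\equiv1$, so $N_G=1$. Eq.~\eqref{eq:UG} then gives $U_{G_H}\ket{i,j}=\ket{-(i+j),\,j-i}$. Because $\sigma_p=\proj{\psi_p}$ is pure, with amplitudes $c_0=\sqrt p$ and $c_1=\sqrt q$, the map
\[
V\ket i=\sum_{j\in\{0,1\}}c_j\ket{-(i+j)}_A\ket{j-i}_B
\]
is an isometry. It satisfies $\Lambda_{\sigma_p}(\rho)=\Tr_B V\rho V^\dagger$ and $\Lambda^c_{\sigma_p}(\rho)=\Tr_A V\rho V^\dagger$.

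\textbf{Step 1: output matrices.} Tracing out $B$ leaves only the matrix elements with $j-i=j'-i'$. Tracing out $A$ leaves only those with $i+j=i'+j'$. This gives two explicit $3\times3$ output matrices, each depending linearly on $\rho$ and bilinearly on $(c_0,c_1)$.

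\textbf{Step 2: spectra for $\rho_0$.} The input $\rho_0$ is supported on $\mathrm{span}\{\ket0,\ket1\}$, so I expect each output to split into a $1\times1$ block plus a $2\times2$ block, possibly after a Clifford relabeling of the output basis. The $1\times1$ block should give $\lambda_0(p)=(3-2p)/4$. The $2\times2$ block should have trace $(1+2p)/4$ and a determinant whose discriminant is $(12p^2-4p+1)/64$, giving $\lambda_\pm(p)$. Carrying out this computation yields $S(\Lambda_{\sigma_p}(\rho_0))=H(\bm\lambda(p))$.

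\textbf{Step 3: the $p\leftrightarrow q$ symmetry.} This is the structural heart of Eqs.~\eqref{eq:ic0}--\eqref{eq:ic1}. The $A$ label $-(i+j)$ and the $B$ label $j-i$ are exchanged by the substitutions $j\mapsto 1-j$ (which swaps $c_0\leftrightarrow c_1$, i.e.\ $p\leftrightarrow q$) together with $i\mapsto$ a Clifford relabeling, up to output permutations. I will look for the Clifford permutations $\pi_A,\pi_B$ that make the identities
\[
\Lambda^c_{\sigma_p}(\rho)\simeq\Lambda_{\sigma_{1-p}}(\rho')
\quad\text{and}\quad
\Lambda_{\sigma_p}(R\rho R^\dagger)\simeq\Lambda^c_{\sigma_p}(\rho)
\]
hold up to unitary conjugation, with $\rho'$ either $\rho$ or $R\rho R^\dagger$. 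The first identity with $\rho=\rho_0$ gives $S(\Lambda^c(\rho_0))=H(\bm\lambda(1-p))$, which is Eq.~\eqref{eq:ic0}. The second identity swaps the two entropies in $I_c$, which is Eq.~\eqref{eq:ic1}.

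\textbf{Step 4: self-complementarity at $p=1/2$.} At $p=1/2$ we have $c_0=c_1$, so the same relabeling becomes an exact identity. I expect it to reduce to $\Lambda^c=X\Lambda X^\dagger$, Eq.~\eqref{eq:self-complementary}, which I will verify entrywise from Step 1. Self-complementary channels are antidegradable, so $Q=0$. At $p\in\{0,1\}$ the environment is a stabilizer state, so Theorem~\ref{thm:stabilizer-additivity} gives an entanglement-breaking channel and again $Q=0$.

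\textbf{Step 5: the sign pattern, Eq.~\eqref{eq:ic-sign} (main obstacle).} Set $F(p)=H(\bm\lambda(p))-H(\bm\lambda(1-p))$. It is antisymmetric about $1/2$, and at the endpoints $\bm\lambda(0)$ and $\bm\lambda(1)$ are permutations of $(3/4,1/4,0)$, so $F(0)=F(1/2)=F(1)=0$. It therefore suffices to show $F>0$ on $(0,1/2)$.

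A majorization shortcut fails: at $p=1/4$ the spectra $\bm\lambda(p)$ and $\bm\lambda(1-p)$ are incomparable. I will therefore argue by calculus on the explicit closed form. First, I compute $F'(0^+)$. Because $\lambda_-(0)=0$, the term $-\lambda_-\log\lambda_-$ contributes a $+\infty$ slope at $p=0$, while its counterpart at $p=1$ contributes an infinite slope of the opposite sign. I expect this to force $F>0$ near $0$. Next, I show that $F$ has no zero in $(0,1/2)$, for instance by proving that $F'$ changes sign exactly once there, using concavity of the entropy in the eigenvalues and monotonicity of $\lambda_\pm$. If that bookkeeping becomes unwieldy, the fallback is a rigorous interval-arithmetic check on a compact subinterval $[\delta,1/2-\delta]$. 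This would be combined with the local expansions near $0$ and near $1/2$; near $1/2$ I would show $F'(1/2)<0$ by direct differentiation.

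Finally, $I_c^{\rm lb}=|F|$ follows from Eq.~\eqref{eq:ic1}.
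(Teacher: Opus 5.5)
\textbf{Assessment: Steps 1--4 are correct, but Step 5, the sign pattern \eqref{eq:ic-sign}, is not proved.}

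Steps 1--4 are sound. Step 3 is in fact a cleaner route than the paper's entry-by-entry computation, because the identities you are looking for hold for every input. Substituting $j\mapsto 1-j$ in the Stinespring sum gives $\Lambda^c_{\sigma_p}(\rho)=X\Lambda_{\sigma_{1-p}}(\rho)X^\dagger$. Similarly, $\Lambda_{\sigma_p}(R\rho R^\dagger)=\Pi\,\Lambda^c_{\sigma_p}(\rho)\,\Pi^\dagger$ with $\Pi\ket{k}=\ket{2-k}$. Hence $I_c(R\rho R^\dagger,\Lambda_{\sigma_p})=-I_c(\rho,\Lambda_{\sigma_p})$ for all $\rho$, and Eq.~\eqref{eq:self-complementary} is just the case $p=1/2$ of the first identity.

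Step 5 carries the real content of Eq.~\eqref{eq:ic-sign}, and it has several problems:
\begin{itemize}
\item The claim that $F'$ changes sign exactly once on $(0,1/2)$ comes with no mechanism.
\item One of the ingredients you cite is false: $\lambda_-$ is not monotone on $(0,1/2)$. It increases up to $p=1/3$ and decreases afterwards.
\item Near $p=0$ both spectra have a vanishing eigenvalue, $\lambda_-(p)\approx p/2$ and $\lambda_-(1-p)\approx p/6$. The two infinite slopes therefore compete. Positivity requires the coefficient comparison $F(p)=\tfrac13\,p\log(1/p)+O(p)$, which you have not done.
\item The interval-arithmetic fallback can be made rigorous, but only as a computer-assisted argument. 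Because $F$ vanishes at both ends, it needs explicit bounds on $F'$ over $(0,\delta]$ and $[1/2-\delta,1/2]$. The sign of $F'(1/2)$ alone does not produce a usable $\delta$.
\end{itemize}

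\textbf{The missing idea.} It is hidden in your own remark that majorization fails. The two spectra have equal trace and equal purity, $P_2(p)=(6p^2-6p+5)/8=P_2(1-p)$. Since purity is strictly Schur-convex, this is exactly why they are incomparable for $p\in(0,1)\setminus\{1/2\}$.

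Equal trace and purity also mean that $\bm\lambda(p)$ and $\bm\lambda(1-p)$ are the root sets of cubics $f(x)=x^3-x^2+e_2x-e_3$ with the same $e_2$. They differ only in $e_3=\det$. Since $e_3(p)=p(1-p)(3-2p)/32$, one has $e_3(p)-e_3(1-p)=p(1-p)(1-2p)/16$.

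The paper closes the argument with the lemma in Appendix~\ref{app:qutrit-entropy}: at fixed $e_1,e_2$, the entropy is strictly increasing in $e_3$. Implicit differentiation gives $\dd\lambda_i/\dd e_3=1/f'(\lambda_i)$, hence $\dd H/\dd e_3=-\frac{1}{\ln 2}\sum_i\ln\lambda_i/f'(\lambda_i)$. This is $-1/\ln 2$ times the second divided difference of $\ln$ at the three roots, and so it is positive by strict concavity. The sign of $I_c(\rho_0,\Lambda_{\sigma_p})$ on $(0,1)$ is therefore the sign of $1-2p$, with no analysis of $F$ itself.
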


\begin{proof}
Direct calculation gives
\begin{align*}
 \Lambda_{\sigma_p}(\rho_0)
 &=\begin{pmatrix}
 3p/4&\sqrt{pq}/4&0\\
 \sqrt{pq}/4&q/4&0\\
 0&0&(3-2p)/4
 \end{pmatrix},\\
 \Lambda_{\sigma_p}^c(\rho_0)
 &=\begin{pmatrix}
 (1+2p)/4&0&0\\
 0&3q/4&\sqrt{pq}/4\\
 0&\sqrt{pq}/4&p/4
 \end{pmatrix}.
\end{align*}
Their spectra are \(\bm\lambda(p)\) and \(\bm\lambda(1-p)\), respectively, proving Eq.~\eqref{eq:ic0}. For \(\rho_1\), the two spectra are exchanged, proving Eq.~\eqref{eq:ic1}.

The spectra \(\bm\lambda(p)\) and \(\bm\lambda(1-p)\) have the same purity,
\begin{equation*}
 P_2(p)=\frac{6p^2-6p+5}{8}=P_2(1-p),
\end{equation*}
and determinants
\begin{equation*}
 \det\bm\lambda(p)=\frac{p(1-p)(3-2p)}{32},
\end{equation*}
so
\begin{equation*}
 \det\bm\lambda(p)-\det\bm\lambda(1-p)
 =\frac{p(1-p)(1-2p)}{16}.
\end{equation*}
Appendix~\ref{app:qutrit-entropy} proves that, for qutrit spectra with fixed trace and purity, the entropy is strictly increasing in the determinant. Equation \eqref{eq:ic-sign} follows.

At \(p=1/2\), Eq.~\eqref{eq:self-complementary} shows that the channel and its complement are unitarily equivalent. Hence the inverse unitary also degrades the complementary output back to the channel output, so the channel is antidegradable and has zero quantum capacity.
The endpoint statement follows from Theorem~\ref{thm:stabilizer-additivity}.
\end{proof}

\begin{figure}[t]
\centering
\includegraphics[width=\columnwidth]{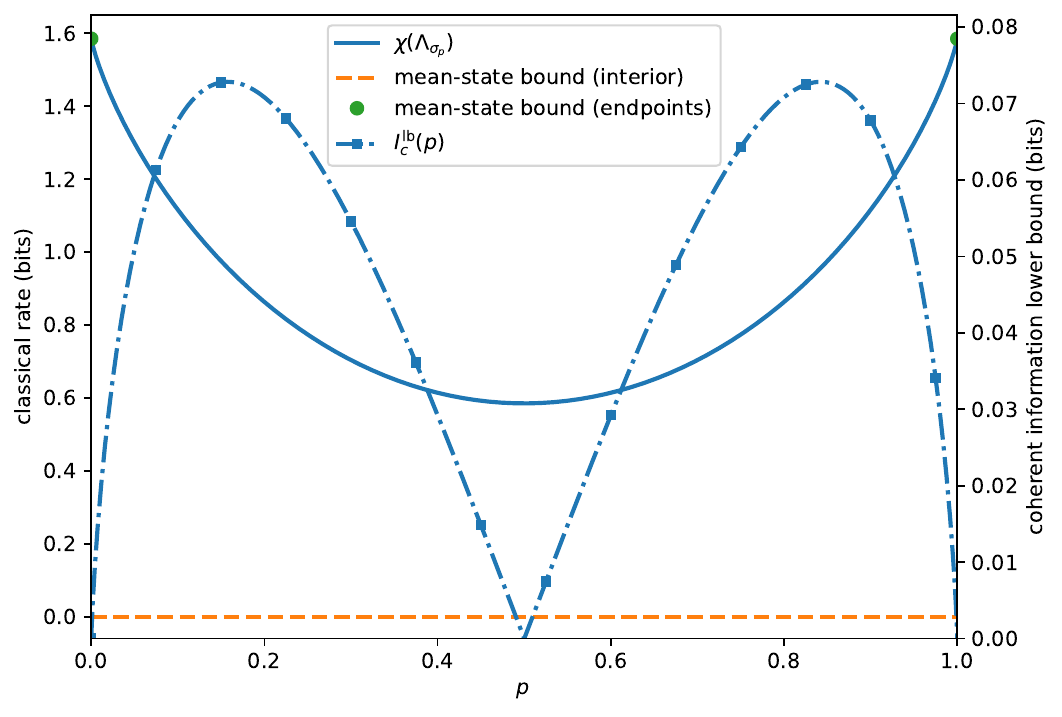}
\caption{The exact one-shot Holevo information \(\chi(\Lambda_{\sigma_p})=\log3-\hbin(p)\) and the explicit coherent-information lower bound \(I_c^{\rm lb}(p)=|I_c(\rho_0,\Lambda_{\sigma_p})|\). The right axis is used for the coherent-information curve. For \(0<p<1\), the old mean-state lower bound is zero because \(M(\sigma_p)=\id/3\); the isolated stabilizer endpoints have mean-state bound \(\log3\). Classical communication remains positive for all \(p\), while the displayed quantum lower bound vanishes at \(p=0,1/2,1\).}
\label{fig:qutrit-rates}
\end{figure}

The qutrit result should be interpreted as coexistence, not as a monotonic enhancement by magic. Along the family, changes in purity, entropy, and nonstabilizerness are inseparable. What is established is that nonstabilizer environments can support both a strictly positive classical rate and a strictly positive quantum rate on open parameter regions.

\section{Discussion}

The mean state is an entropy ceiling determined by exact stabilizer constraints. In contrast to the mean-state entropy, which depends only on the exact stabilizer skeleton, the hierarchy retains information about how different compatible stabilizer resolutions reveal the residual degrees of freedom. Our main structural result resolves the route to that ceiling into finite stabilizer-measurement steps. Each step is simultaneously an entropy increment, a relative entropy, and a loss of coherence with respect to one additional compatible stabilizer observable. The optimized profile \(\mathfrak{s}_k(\sigma)\) records the least such loss at each measurement depth.

The communication consequence separates into two related layers. The spectral-transfer argument is stronger than the compatible hierarchy alone: it applies to every complete stabilizer measurement of the environment and yields the optimized stabilizer-dephasing lower bound. The mean-state hierarchy enters by selecting the compatible measurements determined by the exact stabilizer skeleton. This canonical restriction gives the bound $n\log d-\mathfrak{s}_n(\sigma)$, explains why it refines the mean-state bound, and connects the communication problem back to finite-resolution entropy geometry.

Two questions remain central. First, one would like to characterize all environments for which a stabilizer input minimizes the output entropy. Second, even when the single-letter stabilizer-dephasing formula holds, additivity must be established separately. The qutrit family provides an analytic testing ground: its one-shot classical formula is exact, its quantum-capacity lower bound has a sharp symmetry pattern, and its midpoint is unitarily self-complementary.

The hierarchy also suggests an experimental route. Each maximal endpoint is an ordinary stabilizer measurement distribution, while intermediate values correspond to partial compatible measurements. Estimating the profile therefore requires only stabilizer observables, although optimizing over all isotropic subspaces may become combinatorial for large \(n\). Developing efficient restricted profiles and finite-sample guarantees is a natural next step.

\begin{acknowledgments}
We thank Hai Wang, Wentao Qi, and Sijie Luo for discussions. This project is supported by the National Natural Science Foundation of China (Grants No. 12201555, 12050410232, and 12526648) and the Natural Science Foundation of Hunan Province (Grants No. 2025JJ50050 and 2025JJ60025).
\end{acknowledgments}

\section*{Author contributions}

All authors contributed to the research, discussed the results, wrote the manuscript, and approved the final version.


\section*{Competing interests}
The authors declare no competing interests.

\section*{Data and code availability}
No datasets were generated or analyzed. The numerical curves in Fig.~\ref{fig:qutrit-rates} are evaluations of the closed-form expressions in Eqs.~\eqref{eq:qutrit-holevo}, \eqref{eq:lambda-p}, and \eqref{eq:ic0}; plotting code can be supplied with the source files.


\appendix

\section{Qutrit entropy lemmas}
\label{app:qutrit-entropy}

\begin{lemma}[Entropy increases with determinant at fixed qutrit purity]
Let \(\bm\lambda=(\lambda_1,\lambda_2,\lambda_3)\) be a qutrit probability spectrum with fixed trace and fixed purity. On every nondegenerate interior branch, its Shannon entropy is strictly increasing as a function of \(e_3=\lambda_1\lambda_2\lambda_3\). The statement extends to degenerate spectra by continuity.
\end{lemma}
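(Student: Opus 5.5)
Fix the power sums $e_1=\lambda_1+\lambda_2+\lambda_3=1$ and $p_2=\sum_i\lambda_i^2$. Then $e_2=\tfrac12(e_1^2-p_2)$ is fixed too, and the spectrum is exactly the set of roots of
\[
t^3-e_1t^2+e_2t-e_3=0 .
\]
So on the constraint set the spectrum is a function of $e_3$ alone. Away from coincident roots, the implicit function theorem lets us parametrize each branch smoothly by $e_3$. Differentiating $\lambda_i^3-e_1\lambda_i^2+e_2\lambda_i-e_3=0$ in $e_3$ gives
\[
\frac{d\lambda_i}{de_3}=\frac{1}{f'(\lambda_i)}=\frac{1}{\prod_{j\neq i}(\lambda_i-\lambda_j)} .
\]
This is the standard Lagrange-interpolation formula. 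It automatically satisfies $\sum_i d\lambda_i/de_3=0$ and $\sum_i\lambda_i\,d\lambda_i/de_3=0$, consistent with fixed trace and fixed purity.

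Next compute $dH/de_3$ with $H=-\sum\lambda_i\log\lambda_i$. Because $\sum_i d\lambda_i=0$, we get
\[
\frac{dH}{de_3}=-\sum_i\frac{\log\lambda_i}{\prod_{j\neq i}(\lambda_i-\lambda_j)} .
\]
This is minus the second divided difference $g[\lambda_1,\lambda_2,\lambda_3]$ of $g(t)=\log t$. By the mean-value theorem for divided differences it equals $-\tfrac12 g''(\xi)=\tfrac{1}{2\xi^2}$ for some $\xi$ between the smallest and largest $\lambda_i$. It is therefore strictly positive whenever all $\lambda_i>0$ and are distinct. An alternative is the Hermite--Genocchi integral representation, which avoids choosing $\xi$ and gives the sign directly because $\log$ is strictly concave. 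Either way, $H$ is strictly increasing in $e_3$ on every nondegenerate interior branch.

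Finally, we handle the boundary and degenerate cases. If some $\lambda_i=0$, then $e_3=0$ is the minimum possible value of $e_3$ on the (compact) constraint set. If two roots coincide, the discriminant vanishes, which happens only at the endpoints of the $e_3$-interval of admissible values. The admissible set of $e_3$ is an interval, namely the range where the cubic has three real nonnegative roots. On its interior the spectrum depends continuously on $e_3$, hence so does $H$, and $H$ is strictly increasing there. Continuity of the roots and of $H$ up to the endpoints then extends strict monotonicity to the closed interval.

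The main obstacle is the bookkeeping that "fixed trace and purity" really yields a single connected one-parameter family parametrized by $e_3$. The argument relies on symmetric functions treating the spectrum as an unordered multiset, which removes the apparent branch ambiguity. With that settled, the divided-difference sign argument is the key step, and it is clean.
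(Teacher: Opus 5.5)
Your proof is correct and follows the paper's argument. Both proofs implicitly differentiate the characteristic cubic to get $d\lambda_i/de_3=1/f'(\lambda_i)$, write $dH/de_3$ as minus the second divided difference of $\log$, and get positivity from strict concavity. Your mean-value form $1/(2\xi^2)$ (times $1/\ln 2$ in the paper's base-2 units) and your description of the admissible $e_3$-interval, whose endpoints are exactly the degenerate or zero-eigenvalue spectra, simply spell out the paper's one-line remark that ``continuity handles repeated roots and boundary limits.''
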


\begin{proof}
Fixed trace and purity fix the first two elementary symmetric polynomials,
\begin{equation*}
 e_1=1,\qquad e_2=\frac{1-\sum_i\lambda_i^2}{2}.
\end{equation*}
The eigenvalues are the roots of
\begin{equation*}
 f(x)=x^3-x^2+e_2x-e_3.
\end{equation*}
For distinct positive roots, implicit differentiation gives
\begin{equation*}
 \frac{\dd\lambda_i}{\dd e_3}=\frac{1}{f'(\lambda_i)}.
\end{equation*}
Since \(\sum_i\dd\lambda_i/\dd e_3=0\),
\begin{align}
 \frac{\dd H(\bm\lambda)}{\dd e_3}
 &=-\frac1{\ln2}\sum_i\frac{\ln\lambda_i}{f'(\lambda_i)}\notag\\
 &=-\frac1{\ln2}[\ln x]_{\lambda_1,\lambda_2,\lambda_3}>0,
 \label{eq:divided-difference}
\end{align}
where \([f]_{a,b,c}\) denotes the second divided difference of \(f\) at \(a,b,c\). The divided difference \([\ln x]_{\lambda_1,\lambda_2,\lambda_3}\) is strictly  negative because \(\ln x\) is strictly concave on \((0,\infty)\). Continuity handles repeated roots and boundary limits.
\end{proof}

\begin{lemma}[Minimum qutrit entropy under a purity upper bound]
Let \(r\in[1/2,1]\). If a qutrit state \(\omega\) satisfies
\begin{equation*}
 \Tr\omega^2\le r^2+(1-r)^2,
\end{equation*}
then
\begin{equation}
 S(\omega)\ge\hbin(r).
 \label{eq:qutrit-purity-entropy}
\end{equation}
Equality is attained by the rank-two spectrum \((r,1-r,0)\).
\end{lemma}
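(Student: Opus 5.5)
The plan is to reduce the claim to a minimization over spectra and then combine concavity of entropy with the determinant lemma just proved. Write \(P:=r^2+(1-r)^2\) and let \(\bm\lambda=(\lambda_1,\lambda_2,\lambda_3)\) be the spectrum of \(\omega\). Since \(S(\omega)=H(\bm\lambda)\), the task is to minimize \(H\) over the set
\[
\mathcal K_P=\Bigl\{\bm\lambda\in\Delta_2:\ \sum_i\lambda_i^2\le P\Bigr\},
\]
where \(\Delta_2\) is the probability simplex. The case \(r=1\) is trivial because \(\hbin(1)=0\), so I assume \(1/2\le r<1\). Then \(P\in[1/2,1)\).

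\textbf{Step 1: reduce to the sphere \(\sum_i\lambda_i^2=P\).} The set \(\mathcal K_P\) is convex and compact, being the intersection of the simplex with a Euclidean ball. The function \(H\) is concave, so its minimum over \(\mathcal K_P\) is attained at an extreme point. An extreme point of \(\mathcal K_P\) either lies on the sphere \(\sum_i\lambda_i^2=P\) or is an extreme point of the simplex. The simplex vertices have purity \(1>P\), so they are not in \(\mathcal K_P\). Hence it suffices to show \(H(\bm\lambda)\ge\hbin(r)\) on the circle \(C_P=\{\bm\lambda\in\Delta_2:\sum_i\lambda_i^2=P\}\).

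\textbf{Step 2: minimize over \(C_P\) using the determinant lemma.} On \(C_P\) the trace and purity are fixed, so by the preceding lemma \(H\) is a strictly increasing function of \(e_3=\lambda_1\lambda_2\lambda_3\) along each nondegenerate branch. The branches are the arcs of \(C_P\) between the points where two eigenvalues coincide, which maximize or minimize \(e_3\), and the points where the arc leaves the simplex. The minimum of \(H\) on each branch is therefore at the endpoint of smallest \(e_3\). I will show that this endpoint has \(e_3=0\). For this I use that on an edge \(\lambda_3=0\) the purity \(a^2+(1-a)^2\) ranges over \([1/2,1]\), which contains \(P\). So the circle \(C_P\) meets every edge of the simplex, at the points that are permutations of \((r,1-r,0)\).

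For \(P\ge1/2\) the circle lies partly outside the simplex: its radius about the centroid is at least the distance to the edge midpoints. So \(C_P\) splits into three arcs near the vertices, each arc running from one edge to another. Along such an arc, \(e_3\) rises from \(0\) to a maximum at the symmetric point and falls back to \(0\). Hence every branch has an endpoint with \(e_3=0\), where the spectrum is \((r,1-r,0)\) up to permutation. This gives \(H\ge\hbin(r)\) on \(C_P\), and therefore on \(\mathcal K_P\).

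\textbf{Equality and the main obstacle.} Equality is immediate: the spectrum \((r,1-r,0)\) has purity exactly \(P\) and entropy \(\hbin(r)\). I expect the main obstacle to be the geometric bookkeeping in Step 2. One must check carefully that every branch of \(C_P\cap\Delta_2\) on which the determinant lemma applies has an endpoint on the boundary of the simplex, rather than closing up inside the interior. The fact that \(C_P\) exits the simplex when \(P\ge1/2\) is exactly what rules this out.

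If that argument becomes delicate, I would fall back on a direct Lagrange-multiplier analysis of \(H\) on \(C_P\). Interior critical points satisfy \(\ln\lambda_i=\alpha+\beta\lambda_i\). Since a line meets the concave function \(\ln x\) at most twice, each interior critical point has two equal eigenvalues. I would then compare the entropy at these finitely many symmetric candidates with the value at the boundary points of \(C_P\), which again gives the minimum \(\hbin(r)\).
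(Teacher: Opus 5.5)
Your proof is correct. Its core step matches the paper's: at fixed trace and purity, the determinant lemma sends the entropy minimum to the zero-determinant spectrum, which is a permutation of $(r,1-r,0)$. The two arguments differ in how they handle the inequality $\Tr\omega^2\le P$.

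The paper slices by the actual purity $P'$ of $\omega$. For $P'\le 1/2$ it uses the collision bound $S(\omega)\ge-\log\Tr\omega^2\ge 1\ge\hbin(r)$. For $P'\in(1/2,P]$ it minimizes on the slice to get $\hbin(r_{P'})$ with $r_{P'}=(1+\sqrt{2P'-1})/2\le r$, and then uses monotonicity of $\hbin$ on $[1/2,1]$.

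You instead use concavity of $H$ together with the extreme points of the simplex intersected with the ball. This reduces everything at once to the single slice $\sum_i\lambda_i^2=P$. It removes the collision-entropy case, which the paper needs because for $P'<1/2$ the slice never reaches $e_3=0$. It also removes the monotonicity-in-$P'$ step. The price is the extreme-point argument and the arc bookkeeping.

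That bookkeeping can be shortened. Since $(e_1,e_2,e_3)$ determines the spectrum as a multiset, the slice modulo permutations is parametrized by an interval of admissible $e_3$. Its left endpoint is $0$ exactly when $P\ge 1/2$, and the roots are distinct and positive in its interior. So no geometric picture of arcs is needed.

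One small slip: at $P=1/2$ the circle is inscribed in the simplex, tangent at the edge midpoints, rather than ``partly outside.'' Your conclusion is unaffected, because the tangency points are permutations of $(1/2,1/2,0)$, which have $e_3=0$ and entropy $1=\hbin(1/2)$.
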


\begin{proof}
If \(\Tr\omega^2\le1/2\), the collision-entropy bound gives
\(S(\omega)\ge-\log\Tr\omega^2\ge1\ge\hbin(r)\).

Now fix a purity \(P\in(1/2,r^2+(1-r)^2]\). At fixed trace and purity, the preceding lemma shows that entropy is minimized by the smallest physically allowed determinant, namely zero. The minimizing spectrum is therefore \((r_P,1-r_P,0)\), where
\begin{equation*}
 r_P=\frac{1+\sqrt{2P-1}}{2}.
\end{equation*}
Since \(P\le r^2+(1-r)^2\), one has \(r_P\le r\). Binary entropy decreases on \([1/2,1]\), so \(\hbin(r_P)\ge\hbin(r)\).

\end{proof}

\begin{lemma}[Entropy of the noncomputational qutrit stabilizer measurements]
For \(p\in[0,1]\), \(q=1-p\), and
\begin{equation*}
 r_k(\theta)=\frac{1+2\sqrt{pq}\cos(\theta+2\pi k/3)}3,
\end{equation*}
one has \(H(r_0,r_1,r_2)\ge\hbin(p)\).
\end{lemma}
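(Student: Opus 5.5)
The plan is to pass through collision entropy, exactly as in the main-text sketch. The useful feature is that the collision entropy of $(r_0,r_1,r_2)$ turns out not to depend on the phase $\theta$, so the whole problem reduces to a one-variable comparison in $t:=pq\in[0,1/4]$.

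\emph{Step 1 (compute the collision sum).} Expand the squares:
\[
\sum_{k=0}^2 r_k^2=\frac19\Bigl(3+4\sqrt{pq}\sum_k\cos(\theta+2\pi k/3)+4pq\sum_k\cos^2(\theta+2\pi k/3)\Bigr).
\]
I will use the standard identities $\sum_k\cos(\theta+2\pi k/3)=0$ and $\sum_k\cos^2(\theta+2\pi k/3)=3/2$. These give $\sum_k r_k^2=(1+2pq)/3$ for every $\theta$.

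\emph{Step 2 (Shannon dominates collision entropy).} Since Shannon entropy dominates R\'enyi-2 entropy,
\[
H(r_0,r_1,r_2)\ge-\log\sum_k r_k^2=\log3-\log(1+2t)=:g(t).
\]
It therefore suffices to prove $g(t)\ge\hbin(p)$.

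\emph{Step 3 (monotonicity comparison).} This is the step I expect to be the crux, although it should be easy. By symmetry $p\leftrightarrow q$, assume $p\in[0,1/2]$. On this interval $t=p(1-p)$ is strictly increasing in $p$, and so is $\hbin(p)$. Hence $\hbin(p)=:\eta(t)$ is a nondecreasing function of $t$ on $[0,1/4]$; explicitly $\eta(t)=\hbin\bigl((1-\sqrt{1-4t})/2\bigr)$. Meanwhile $g$ is strictly decreasing in $t$. So $g-\eta$ is decreasing on $[0,1/4]$, and for every $t$
\[
g(t)-\eta(t)\ge g(1/4)-\eta(1/4)=\log3-\log\tfrac32-\hbin(\tfrac12)=1-1=0.
\]
This yields $H(r_0,r_1,r_2)\ge g(t)\ge\hbin(p)$.

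Equality in the last comparison occurs only at $p=1/2$. There the bound is tight only at the level of the collision inequality, and that is consistent with the computational-basis minimizer used in the qutrit theorem. The only potential pitfall is the bookkeeping in Step 1, namely the trigonometric sums. If a direct check is preferred, a second-order expansion of $g(t)-\eta(t)$ around $p=1/2$ gives $(10/3)e^2/\ln2\ge0$ with $p=1/2+e$, which confirms the endpoint behaviour.
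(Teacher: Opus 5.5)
Your proof is correct and follows the paper's argument: the same trigonometric identities give $\sum_k r_k^2=(1+2pq)/3$, and the collision-entropy bound reduces the claim to $\log 3-\log(1+2pq)\ge\hbin(p)$, which you settle by monotonicity on $[0,1/2]$ with equality at $p=1/2$. The paper writes this last step as $F'(p)<0$ for $F(p)=\log[3/(1+2p(1-p))]-\hbin(p)$; the two negative terms of that derivative are exactly your two monotonicity facts, so your reparametrization by $t=pq$ is just a derivative-free restatement of the same argument.
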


\begin{proof}
The trigonometric identities \(\sum_k\cos(\theta+2\pi k/3)=0\) and \(\sum_k\cos^2(\theta+2\pi k/3)=3/2\) give $\sum_{k=0}^2r_k^2=\frac{1+2pq}{3}$.
Thus
\begin{equation*}
 H(r_0,r_1,r_2)\ge\log3-\log(1+2pq).
\end{equation*}
It remains to show \(F(p):=\log[3/(1+2p(1-p))]-\hbin(p)\ge0\). By symmetry it suffices to take \(0\le p\le1/2\). In natural logarithms,
\begin{equation*}
 (\ln2)F'(p)
 =-\frac{2(1-2p)}{1+2p(1-p)}-\ln\frac{1-p}{p}<0
\end{equation*}
for \(0<p<1/2\), while \(F(1/2)=0\). Hence \(F(p)\ge0\).
\end{proof}

\section{Self-complementarity at the qutrit midpoint}

For completeness, let \(\rho=(\rho_{jk})_{j,k=0}^2\). A direct calculation from the convolutional unitary \(U_{G_H}\) with environment \(\sigma_{1/2}\) gives matrices \(A=\Lambda_{\sigma_{1/2}}(\rho)\) and \(B=\Lambda^c_{\sigma_{1/2}}(\rho)\) satisfying
\[
B=XAX^\dagger,
\]
where \(X|j\rangle=|j+1\rangle\). This proves Eq.~\eqref{eq:self-complementary} for arbitrary, including mixed, input states.


\section{Symplectic similitudes induced by the convolution matrix}
\label{app:symplectic-similitudes}

We record the elementary symplectic calculation used in the proof of
Theorem~\ref{thm:holevo-bound}.

\begin{lemma}
\label{lem:AGBG-similitudes}
Let
\[
A_G(p,q)=(N_Gg_{11}p,g_{00}q),B_G(p,q)=(-N_Gg_{10}p,g_{01}q),
\]
where \(N_G=(\det G)^{-1}\). If \(G\) is positive, then \(A_G\) and \(B_G\)
are invertible symplectic similitudes on \(V^n\). More precisely, for all
\(x,y\in V^n\),
\[
\langle A_Gx,A_Gy\rangle_s
=
N_G g_{00}g_{11}\,\langle x,y\rangle_s,
\]
and
\[
\langle B_Gx,B_Gy\rangle_s
=
-\,N_G g_{01}g_{10}\,\langle x,y\rangle_s .
\]
In particular, \(A_G\) and \(B_G\) map isotropic subspaces to isotropic
subspaces and maximal isotropic subspaces to maximal isotropic subspaces.
\end{lemma}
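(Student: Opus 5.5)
The plan is a direct computation with the symplectic form, followed by a counting argument for maximality. Write $x=(p,q)$ and $y=(p',q')$, so that $\langle x,y\rangle_s=p\cdot q'-p'\cdot q$.

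For $A_G$ we have $A_Gx=(N_Gg_{11}p,\,g_{00}q)$ and $A_Gy=(N_Gg_{11}p',\,g_{00}q')$. Substituting into the form gives
\[
\langle A_Gx,A_Gy\rangle_s=N_Gg_{11}g_{00}\,(p\cdot q'-p'\cdot q).
\]
This holds because scalars in $\mathbb{Z}_d$ factor out of the dot product. For $B_G$ the same substitution gives
\[
\langle B_Gx,B_Gy\rangle_s=(-N_Gg_{10})g_{01}\,(p\cdot q'-p'\cdot q),
\]
which is the stated multiplier $-N_Gg_{01}g_{10}$. Both identities are a single line each and need no further input.

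Invertibility is next. Positivity means all $g_{ij}\neq0$, and $N_G$ is a unit because $G$ is invertible over $\mathbb{Z}_d$. Since $d$ is prime, every nonzero scalar is a unit. Each map is block-diagonal with scalar blocks that are units, so each is a linear bijection of $V^n$. For the same reason the similitude multipliers $c_A=N_Gg_{00}g_{11}$ and $c_B=-N_Gg_{01}g_{10}$ are nonzero.

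The consequences for subspaces then follow. Let $T\in\{A_G,B_G\}$ with multiplier $c$. If $N$ is isotropic, then $\langle Tu,Tv\rangle_s=c\langle u,v\rangle_s=0$ for all $u,v\in N$, so $T(N)$ is isotropic. Because $T^{-1}$ is again a similitude, with multiplier $c^{-1}$, preimages of isotropic subspaces are isotropic as well. For maximality I would use the fact that in the nondegenerate $2n$-dimensional space $V^n$ the maximal isotropic subspaces are exactly the isotropic subspaces of dimension $n$. A linear bijection preserves dimension, so $T(\mathcal M)$ and $T^{-1}(\mathcal M)$ are isotropic of dimension $n$, hence maximal. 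As a check one can also argue directly: if $T(\mathcal M)\subsetneq N'$ with $N'$ isotropic, then $T^{-1}(N')$ is an isotropic subspace strictly containing $\mathcal M$, which is a contradiction.

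No step here is a real obstacle. The only point that needs care is recording that $d$ is prime, so nonzero entries are units, and that $N_G$ is nonzero.
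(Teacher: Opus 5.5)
Your proposal is correct and follows the paper's proof essentially line for line. It uses the same direct expansion of the symplectic form for $A_G$ and $B_G$, gets invertibility from the nonzero scalar blocks, and preserves (maximal) isotropy by dimension counting; your explicit remarks simply spell out steps the paper leaves implicit: $d$ prime makes nonzero entries units, $T^{-1}$ is a similitude with multiplier $c^{-1}$ (which covers the preimage $K_{\mathcal M}=B_G^{-1}(\mathcal M)$ used later), and maximal isotropic means dimension $n$.
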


\begin{proof}
Write \(x=(p,q)\) and \(y=(p',q')\). Then
\[
\begin{aligned}
\langle A_Gx,A_Gy\rangle_s
&=
(N_Gg_{11}p)\cdot(g_{00}q')
-
(N_Gg_{11}p')\cdot(g_{00}q)  \\
&=
N_Gg_{00}g_{11}
\bigl(p\cdot q'-p'\cdot q\bigr)  \\
&=
N_Gg_{00}g_{11}\,\langle x,y\rangle_s .
\end{aligned}
\]
Similarly,
\[
\begin{aligned}
\langle B_Gx,B_Gy\rangle_s
&=
(-N_Gg_{10}p)\cdot(g_{01}q')
-
(-N_Gg_{10}p')\cdot(g_{01}q)  \\
&=
-\,N_Gg_{01}g_{10}
\bigl(p\cdot q'-p'\cdot q\bigr) \\
&=
-\,N_Gg_{01}g_{10}\,\langle x,y\rangle_s .
\end{aligned}
\]
If \(G\) is positive, all four entries \(g_{00},g_{01},g_{10},g_{11}\)
are nonzero, so the coordinate formulas show that \(A_G\) and \(B_G\)
are invertible. The displayed identities then imply that they preserve
isotropic subspaces. Since invertible linear maps preserve dimension,
maximal isotropic subspaces are mapped to maximal isotropic subspaces.
\end{proof}

\end{document}